\newtheorem{theorem}{Theorem}
\newtheorem{definition}{Definition}
\newtheorem{corollary}{Corollary}
\newtheorem{remark}{Remark}
\newtheorem{lemma}{Lemma}
\newtheorem{fact}{Fact}
\DeclareRobustCommand\widecheck[1]{{\mathpalette\@widecheck{#1}}}
\def\@widecheck#1#2{%
    \setbox\z@\hbox{\m@th$#1#2$}%
    \setbox\tw@\hbox{\m@th$#1%
       \widehat{%
          \vrule\@width\z@\@height\ht\z@
          \vrule\@height\z@\@width\wd\z@}$}%
    \dp\tw@-\ht\z@
    \@tempdima\ht\z@ \advance\@tempdima2\ht\tw@ \divide\@tempdima\thr@@
    \setbox\tw@\hbox{%
       \raise\@tempdima\hbox{\scalebox{1}[-1]{\lower\@tempdima\box
\tw@}}}%
    {\ooalign{\box\tw@ \cr \box\z@}}}
\title{\LARGE \bf
Satisfiability Bounds for $\omega$-regular Properties in Interval-valued Markov Chains
}
\author{Maxence Dutreix and Samuel Coogan
\thanks{Maxence Dutreix is with the School of Electrical and Computer Engineering, Georgia Institute of Technology, Atlanta, 30332, USA        
        {\tt\small maxdutreix@gatech.edu}}%
\thanks{Samuel Coogan is with the School of Electrical and Computer Engineering and the School of Civil and Environmental Engineering, Georgia Institute of Technology, Atlanta, 30332, USA
        {\tt\small sam.coogan@gatech.edu}}%
}
\begin{document}

\maketitle
\thispagestyle{empty}
\pagestyle{empty}

%%%%%%%%%%%%%%%%%%%%%%%%%%%%%%%%%%%%%%%%%%%%%%%%%%%%%%%%%%%%%%%%%%%%%%%%%%%%%%%%
\begin{abstract}

We derive an algorithm to compute satisfiability bounds for arbitrary $\omega$-regular properties in an Interval-valued Markov Chain (IMC) interpreted in the adversarial sense. IMCs generalize regular Markov Chains by assigning a range of possible values to the transition probabilities between states. In particular, we expand the automata-based theory of $\omega$-regular property verification in Markov Chains to apply it to IMCs. Any $\omega$-regular property can be represented by a Deterministic Rabin Automata (DRA) with acceptance conditions expressed by Rabin pairs. Previous works on Markov Chains have shown that computing the probability of satisfying a given $\omega$-regular property reduces to a reachability problem in the product between the Markov Chain and the corresponding DRA. We similarly define the notion of a product between an IMC and a DRA. Then, we show that in a product IMC, there exists a particular assignment of the transition values that generates a largest set of non-accepting states. Subsequently, we prove that a lower bound is found by solving a reachability problem in that refined version of the original product IMC. We derive a similar approach for computing a satisfiability upper bound in a product IMC with one Rabin pair. For product IMCs with more than one Rabin pair, we establish that computing a satisfiability upper bound is equivalent to lower-bounding the satisfiability of the complement of the original property.  A search algorithm for finding the largest accepting and non-accepting sets of states in a product IMC is proposed. Finally, we demonstrate our findings in a case study.

\end{abstract}

%%%%%%%%%%%%%%%%%%%%%%%%%%%%%%%%%%%%%%%%%%%%%%%%%%%%%%%%%%%%%%%%%%%%%%%%%%%%%%%%
\section{INTRODUCTION}

\textit{Markov Chains} have been extensively used as an intuitive yet powerful mathematical tool for modeling systems evolving through time in a stochastic fashion. They allow us to answer critical questions about the behavior of the underlying systems, often specified in terms of symbolic temporal logics, and derive appropriate control strategies \cite{ding2014optimal} \cite{fu2014probably}. As a superset of \textit{Linear Temporal Logic} (LTL), \textit{$\omega$-regular properties} are of particular interest to us due to their expressiveness. One can easily translate natural language inquiries such as \textit{``Will the system eventually reach a good state''} or \textit{``Will the system never reach a bad state and visit a good state infinitely often?"} into well-defined regular expressions. A method for computing the probability of fulfilling any $\omega$-regular property in Markov Chains is described in \cite{baier2008principles}. However, this derivation assumes that the probabilities of transition from state to state are known exactly.
  
Accessing the true probabilities of transitions might be impossible in practice and their values may only be approximated, e.g. from collected data. Furthermore, there has been a growing interest in abstractions of stochastic hybrid systems \cite{abate2011approximate} \cite{zamani2014symbolic}, and the discretization of a stochastic continuous state space might sometimes result in a finite abstraction where the transitions between states cannot be expressed as a single number \cite{dutreix2018}. To account for this, Markov Chains are augmented into \textit{Interval-valued Markov Chains (IMC)} where the probabilities of transition from state to state are given to lie within some interval \cite{kozine2002interval} \cite{vskulj2009discrete}. A direct consequence of this characteristic is that the probability of satisfying temporal properties in an IMC has to be formulated as an interval as well for all initial states.

Depending on the context in which they are utilized, IMCs give rise to two different semantic interpretations. One may view an IMC as an imperfect representation of a unique underlying Markov Chain whose transition bounds are not known exactly; this is called the \textit{Uncertain Markov Chain} (UMC) interpretation of IMCs. On the other hand, IMCs can be interpreted in an adversarial sense where a new probability distribution consistent with the transition bounds is non-deterministically selected each time a state is visited. In this case, we refer to an IMC as an \textit{Interval Markov Decision Process} (IMDP).

In \cite{chatterjee2008model}, the authors discuss the feasibility of the model-checking problem in both interpretations of IMCs and its computational complexity for $\omega$-regular properties. Nevertheless, efficient algorithms for computing satisfiability bounds are not provided.  Such bounds prove valuable in certain applications, such as the targeted state-space refinement of hybrid systems where IMCs naturally arise. 

Satisfiability bounds were calculated in \cite{lahijanian2015formal} for the Probabilistic Computation Tree Logic (PCTL) in IMDPs, but PCTL cannot express useful specifications such as liveness properties, i.e. the infinitely repeated occurence of an event. An automaton-based stochastic technique that asymptotically converges to lower and upper bounds for LTL formulas in UMCs was developed in \cite{benedikt2013ltl}. To the best of our knowledge, a deterministic algorithm capable of finding satisfiability bounds for arbitrary $\omega$-regular properties in IMDPs has not been presented in the literature and is the main contribution of this paper. 

Our objective is to extend the automaton-based procedure in \cite{baier2008principles} to accommodate IMCs interpreted as IMDPs. All $\omega$-regular properties can be converted into \textit{Deterministic Rabin Automata} (DRA) whose acceptance conditions are described by sets of states grouped in pairs called \textit{Rabin Pairs} \cite{baier2008principles}. Constructing the Cartesian product of a Markov Chain with a DRA enables to compute the probability that the stochastic evolution of the Markov Chain's state fulfills the property encoded in the DRA. In particular, it was shown that this probability is equal to that of reaching special sets of states called accepting \textit{Bottom Strongly Connected Components} (BSCC) in the product Markov Chain. Unfortunately, such a straightforward procedure does not work in general for IMCs. Although a similar definition of the Cartesian product between an IMC and a DRA can be established, we observe that the set of accepting BSCCs depends on the assumed transition values in the resulting product IMC. The structure of a product IMC is indeed specifically determined by transitions which can either create or eliminate a path between two states, i.e. transitions with a zero probability lower bound and a non-zero upper bound.

Nonetheless, we first show in this paper that a particular instantiation on the transition values yields a largest set of so-called \textit{non-accepting states}. Then, we show that computing a lower bound on the satisfiability of the property expressed by the DRA reduces to a reachability problem on the non-accepting states in the refined product IMC. If the underlying DRA only has one Rabin pair, we conversely prove that an upper bound is found by solving a reachability problem for a particular refinement of the product IMC that generates the most \textit{accepting states}. In the case where the DRA possesses more than one Rabin pair, we show that an upper bound is calculated by lower-bounding the satisfiability for the complement property of the DRA. Furthermore, we describe an efficient algorithm for finding the largest sets of non-accepting and accepting states in a product IMC along with their appropriate refinement.  Lastly, we illustrate our algorithm through the study of an agent moving in space according to an IMC. 

The paper is organized as follows: in Section I, we introduce important concepts and notations; then, in Section II, we rigorously formulate the problem to be solved; in Section III, we derive the main concepts used for bounding the satisfiability of $\omega$-regular properties in IMCs  and we present an algorithm for finding the largest sets of accepting and non-accepting states; finally, we present a case study demonstrating our findings in Section IV.

%%%%%%%%%%%%%%%%%%%%%%%%%%%%%%%%%%%%%%%%%%%%%%%%%%%%%%%%%%%%%%%%%%%%%%%%%%%%%%%%
\section{PRELIMINARIES}

An \textit{Interval-Valued Markov Chain (IMC)} \cite{dutreix2018} is a 5-tuple $\mathcal{I} = (Q, \widecheck{T}, \widehat{T}, \Pi, L)$ where:
\begin{itemize}
\setlength{\itemsep}{0pt}
\item $Q$ is a finite set of states,
\item $\widecheck{T}: Q \times Q \rightarrow [0, 1] $ maps pairs of states to a lower transition bound so that $\widecheck{T}_{Q_{j} \rightarrow Q_{\ell}} := \widecheck{T}(Q_{j}, Q_{\ell})$ denotes the lower bound of the transition probability from state $Q_{j}$ to state $Q_{\ell}$, and 
\item $\widehat{T}: Q \times Q \rightarrow [0, 1] $ maps pairs of states to an upper transition bound so that $\widehat{T}_{Q_{j} \rightarrow Q_{\ell}} := \widehat{T}(Q_{j}, Q_{\ell})$ denotes the upper bound of the transition probability from state $Q_{j}$ to state $Q_{\ell}$,
\item $\Pi$ is a finite set of atomic propositions,
\item $L : Q \rightarrow 2^{\Pi}$ is a labeling function that assigns a subset of $\Pi$ to each state $Q$,
\end{itemize}
and $\widecheck{T}$ and $\widehat{T}$ satisfy $\widecheck{T}(Q_j,Q_\ell)\leq \widehat{T}(Q_j,Q_\ell)$ for all $Q_j,Q_\ell\in Q$ and 
\begin{equation}
\label{eq:36}
\sum_{Q_\ell\in Q} \widecheck{T}(Q_j,Q_\ell)\leq 1\leq \sum_{Q_\ell\in Q}\widehat{T}(Q_j,Q_\ell)  
\end{equation}
 for all $Q_j\in Q$.\\
 
A \textit{Markov Chain} $\mathcal{M} = (Q, T, \Pi, L)$ is similarly defined with the difference that the transition probability function $T: Q \times Q \rightarrow [0, 1] $ satisfies $0 \leq T(Q_j, Q_\ell) \leq 1$ for all $Q_j, Q_{\ell}\in Q$ and $\sum_{Q_{\ell} \in Q} T(Q_j, Q_\ell) = 1$ for all $Q_j\in Q$. Markov Chains evolve in discrete time; at each discrete time step, the Markov Chain transitions from its current state $Q_{i}$ to a state $Q_{j}$ according to the probability distribution set by $T$. For any sequence of states $\pi = q_0 q_1 q_2 \ldots$ in $\mathcal{M}$, with $q_{j} \in Q$, $q_0$ is called an \textit{initial state}. \\

A Markov Chain $\mathcal{M}$ is said to be \textit{induced} by IMC $\mathcal{I}$ if for all $Q_j,Q_\ell\in Q$,
\begin{align}
\widecheck{T}(Q_j,Q_\ell) \leq T(Q_j, Q_\ell) \leq \widehat{T}(Q_j,Q_\ell) \; \;.
\end{align}

An IMC $\mathcal{I}_{2}$ with transition functions $\widecheck{T}_{2}$ and $\widehat{T}_{2}$ is said to be \textit{induced} by IMC $\mathcal{I}_{1}$ with transition functions $\widecheck{T}_{1}$ and $\widehat{T}_{1}$ if both $\mathcal{I}_{1}$ and $\mathcal{I}_{2}$ have the same $Q$, $\Pi$ and $L$, and, for all $Q_j,Q_\ell\in Q$,
\begin{align}
\widecheck{T}_{1}(Q_j,Q_\ell) \leq \widecheck{T}_{2}(Q_j,Q_\ell) \leq \widehat{T}_{2}(Q_j,Q_\ell) \leq \widehat{T}_{1}(Q_j,Q_\ell) \; \;.
\end{align}
\noindent In this case, it follows that any Markov Chain induced by $\mathcal{I}_{2}$ is also induced by $\mathcal{I}_{1}$.\\

An IMC $\mathcal{I}$ is said to be interpreted as an \textit{Interval Markov Decision Process} (IMDP) if, at each time step $k$, the external environment non-deterministically chooses a Markov chain $\mathcal{M}_k$ induced by $\mathcal{I}$ and the next transition occurs according to $\mathcal{M}_k$. A mapping $\mathcal{\nu}$ from any finite path $\pi = q_0 q_1\ldots q_k$ in $\mathcal{I}$ to a Markov Chain $\mathcal{M}_k$ is called an \textit{adversary}. The set of all possible adversaries of $\mathcal{I}$ is denoted by $\mathcal{\nu}_{\mathcal{I}}$.\\

An IMC $\mathcal{I}$ is said to be interpreted as an \textit{Uncertain Markov Chain} (UMC) if the external environment non-deterministically chooses a single Markov chain $\mathcal{M}_0$ at $k = 0$ and the sequence of states $\pi = q_0 q_1 q_2 \ldots$ is determined by the transition probabilities in $\mathcal{M}_0$.\\

A \textit{Deterministic Rabin Automaton (DRA)} \cite{baier2008principles} is a 5-tuple $\mathcal{A} = (S, \Sigma, \delta, s_0, Acc)$ where:
\begin{itemize}
\item $S$ is a finite set of states,
\item $\Sigma$ is an alphabet,
\item $\delta : Q \times \Sigma \rightarrow S$ is a transition function
\item $s_0$ is an initial state
\item $Acc \subseteq 2^{S} \times 2^{S}$. An element $(E_{i}, F_{i}) \in Acc$, with $E_{i}, F_{i} \subset S$, is called a \textit{Rabin Pair}.
\end{itemize}

The probability of satisfying $\omega$-regular property $\phi$ starting from initial state $Q_i$ in IMC $\mathcal{I}$ under adversary $\mathcal{\nu}$ is denoted by $\mathcal{P}_{\mathcal{I}[\mathcal{\nu}]}(Q_i \models \phi)$.
The greatest lower bound and the least upper bound probabilities of satisfying property $\phi$ starting from initial state $Q_i$ in IMC $\mathcal{I}$ are denoted by $\widecheck{\mathcal{P}}_{\mathcal{I}}(Q_i \models \phi)$ and $\widehat{\mathcal{P}}_{\mathcal{I}}(Q_i \models \phi)$ respectively.\\

$\mathcal{P}_{\mathcal{M}}(Q_i\models \Diamond U)$ for $U \subseteq Q$ denotes the probability of eventually reaching set $U$ from initial state $Q_i$ in Markov Chain $\mathcal{M}$.

\section{PROBLEM FORMULATION}

Let $\mathcal{I}$ be an IMC interpreted as an IMDP with a set of possible adversaries $\mathcal{\nu}_{\mathcal{I}}$ and a set of atomic propositions $\Pi$, and let $\phi$ be an $\omega$-regular property over alphabet $\Pi$ (for formal definitions of $\omega$-regular properties and alphabet, see \cite{baier2008principles}).  Our goal is to find a systematic and efficient method for calculating $\widecheck{\mathcal{P}}_{\mathcal{I}}(Q_i \models \phi)$ and $\widehat{\mathcal{P}}_{\mathcal{I}}(Q_i \models \phi)$ where, for any adversary $\mathcal{\nu} \in \mathcal{\nu}_{\mathcal{I}}$,
\begin{align}
\widecheck{\mathcal{P}}_{\mathcal{I}}(Q_i \models \phi) \leq \mathcal{P}_{\mathcal{I}[\mathcal{\nu}]}(Q_i \models \phi) \leq \widehat{\mathcal{P}}_{\mathcal{I}}(Q_i \models \phi) \; .
\end{align}

Our approach extends the work in \cite{baier2008principles} for the verification of regular Markov chains against $\omega$-regular properties using automata-based methods. First, we generate a DRA $\mathcal{A}$ that recognizes the language induced by property $\phi$. Such a DRA always exists and creating it is a well studied problem. Several algorithms exist to accomplish this task efficiently for a large subset of $\omega$-regular expressions \cite{klein2006experiments} \cite{babiak2013effective}. Then, we construct the product $\mathcal{I} \otimes \mathcal{A}$, which is itself an IMC.\\

\begin{definition}
Let $\mathcal{I} = (Q, \widecheck{T}, \widehat{T}, \Pi, L)$ be an Interval-valued Markov Chain and $\mathcal{A} = (S, 2^{\Pi}, \delta, s_0, Acc)$ be a Deterministic Rabin Automaton. The \textit{product} $\mathcal{I} \otimes \mathcal{A} = (Q \times S, \widecheck{T'}, \widehat{T'}, Acc', L')$ is an Interval-valued Markov Chain where:
\begin{itemize}
\item $Q \times S$ is a set of states,\\
\item $\widecheck{T'}_{ \left<Q_{j},s\right> \rightarrow \left<Q_{\ell},s'\right>} = 
\begin{cases}
\widecheck{T'}_{Q_{j} \rightarrow Q_{\ell}}, \;\; \text{if} \;\; s' = \delta(s, L(Q_{\ell}))\\ \;\;\; \;\;\;\;\;\;0, \;\;\;\;\;\; \text{otherwise}
\end{cases}$\mbox{}\\\\
\item $\widehat{T'}_{ \left<Q_{j},s\right> \rightarrow \left<Q_{\ell},s'\right>} = 
\begin{cases}
\widehat{T'}_{Q_{j} \rightarrow Q_{\ell}}, \;\; \text{if} \;\; s' = \delta(s, L(Q_{\ell}))\\ \;\;\; \;\;\;\;\;\;0, \;\;\;\;\;\; \text{otherwise}
\end{cases}$\mbox{}\\
\item $Acc' = \lbrace E_{1}, E_{2}, \ldots, E_{k}, F_{1}, F_{2}, \ldots, F_{k} \rbrace$ is a set of atomic propositions, where $E_{i}$ and $F_{i}$ are the sets in the Rabin pairs of $Acc$,
\item $L': Q \times S \rightarrow 2^{Acc'}$ such that $H \in L'(\left<Q_{j},s\right>)$ if and only if $s \in H$, for all $H \in Acc' $ and for all $j$.
\end{itemize}
\end{definition}\mbox{}\\

A Markov Chain $\mathcal{M} \otimes \mathcal{A}$ induced by $\mathcal{I} \otimes \mathcal{A}$ is called a \textit{product Markov Chain}.\\

The probability of satisfying $\phi$ from initial state $Q_i$ in a Markov Chain equals that of reaching an accepting \textit{Bottom Strongly Connected Component} (BSCC) from initial state $\left<Q_i,s_0 \right>$ in the product Markov Chain with $\mathcal{A}$ \cite{baier2008principles}.\\

\begin{definition}
Given a Markov Chain $\mathcal{M}$ with states $Q$, a subset $B \subseteq Q$ is called a \textit{Bottom Strongly Connected Component} (BSCC) of $\mathcal{M}$ if it satisfies the following conditions:
\begin{itemize}
\item $B$ is strongly connected, that is, for each pair of states $(q,t)$ in $B$, there exists a path fragment $q_{0}q_{1}\ldots q_n$ such that $T(q_i,q_{i+1}) > 0$ for $i = 0,\; 1,\; \ldots, \; n-1$, and $q_i \in B$ for $0 \leq i \leq n$ with $q_0 = q$ and $q_n = t$,
\item no proper superset of $B$ is strongly connected,
\item $\forall s \in B$, $\Sigma_{t \in B}T(s,t) = 1$.\\
\end{itemize}
\end{definition}
\begin{definition}
A Bottom Strongly Connected Component $B$ of a product Markov Chain $\mathcal{M} \otimes \mathcal{A}$ is said to be \textit{accepting} if:
\begin{align}
\exists i: & \Bigg( \; \exists \left<Q_{j},s_{\ell} \right> \in B \; . \; F_{i} \in L'(\left<Q_{j},s_{\ell} \right>) \; \Bigg) \nonumber \\ 
& \wedge \Bigg( \;  \forall \left<Q_{j},s_{\ell} \right> \in B \; . \; E_{i} \not \in L'(\left<Q_{j},s_{\ell} \right>) \; \Bigg).
\end{align}
\end{definition}\mbox{}\\

In words, every state in a BSCC $B$ is reachable from any state in $B$, and every state in $B$ only transitions to another state in $B$. Moreover, $B$ is accepting when at least one of its states maps to the accepting set of a Rabin pair, while no state in $B$ maps to the non-accepting set of that same pair.\\

\begin{definition}
A state of $\mathcal{M} \otimes \mathcal{A}$ is $\textit{accepting}$ if it belongs to an accepting BSCC. The set of accepting states in $\mathcal{M} \otimes \mathcal{A}$  is denoted by $U^A_{\mathcal{M} \otimes \mathcal{A}}$; a state is \textit{non-accepting} if it belongs to a BSCC that is not accepting. The set of non-accepting states in $\mathcal{M} \otimes \mathcal{A}$ is denoted by $U^N_{\mathcal{M} \otimes \mathcal{A}}$. We omit the subscripts when they are obvious from the context.\\
\end{definition}

Note that each product Markov Chain $\mathcal{M} \otimes \mathcal{A}$ induced by $\mathcal{I} \otimes \mathcal{A}$ simulates the behavior of $\mathcal{I}$ under some adversary $\mathcal{\nu} \in \mathcal{\nu}_{\mathcal{I}}$. Indeed, for any two states $Q_j$ and $Q_{\ell}$ in $\mathcal{I}$ and some states $s, s', s''$ and $s'''$ in $\mathcal{A}$, we allow $T_{ \left<Q_{j},s\right> \rightarrow \left<Q_{\ell},s'\right>}$ and $T_{ \left<Q_{j},s''\right> \rightarrow \left<Q_{\ell},s'''\right>}$ to assume different values in $\mathcal{M} \otimes \mathcal{A}$, which means that the transition probability between $Q_j$ and $Q_{\ell}$ is permitted to change depending on the history of the path in $\mathcal{I}$.\\

\begin{fact}
\cite{baier2008principles} For any adversary $\mathcal{\nu} \in \mathcal{\nu}_{\mathcal{I}}$ in $\mathcal{I}$, it holds that $\mathcal{P}_{\mathcal{I}[\nu]}(Q_i \models \phi)$ =  $\mathcal{P}_{(\mathcal{M} \otimes \mathcal{A})_{\nu}}( \left<Q_i, s_{0} \right> \models \Diamond U^{A})$, where $(\mathcal{M} \otimes \mathcal{A})_{\nu}$ denotes the product Markov Chain induced by $\mathcal{I} \otimes \mathcal{A}$ corresponding to adversary $\nu$.\\
\end{fact}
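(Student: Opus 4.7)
The plan is to lift everything to the product level and then invoke the classical Markov-chain-to-DRA result of \cite{baier2008principles}. Fix an adversary $\nu \in \nu_{\mathcal{I}}$. Because $\mathcal{A}$ is deterministic, every path $\pi = q_0 q_1 q_2 \ldots$ in $\mathcal{I}$ lifts to a unique path $\pi' = \langle q_0, s_0\rangle \langle q_1, s_1\rangle \langle q_2, s_2\rangle \ldots$ in the product, where $s_{k+1} = \delta(s_k, L(q_{k+1}))$. The definition of the product then ensures that the transition probability $T'_{\langle q_k,s_k\rangle \to \langle q_{k+1}, s_{k+1}\rangle}$ in $(\mathcal{M}\otimes\mathcal{A})_\nu$ equals the transition probability $T_{q_k \to q_{k+1}}$ assigned by $\nu$ in $\mathcal{I}$, since the only non-zero transition from $\langle q_k,s_k\rangle$ above $q_{k+1}$ goes to $\langle q_{k+1}, \delta(s_k,L(q_{k+1}))\rangle$. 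Hence this lifting is a probability-preserving bijection between paths of $\mathcal{I}[\nu]$ starting at $Q_i$ and paths of $(\mathcal{M}\otimes\mathcal{A})_\nu$ starting at $\langle Q_i, s_0\rangle$.

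Next, I would translate the satisfaction of $\phi$ into a condition on $\pi'$. Since $\mathcal{A}$ recognizes $\phi$, the trace $L(\pi)$ satisfies $\phi$ if and only if the induced run $s_0 s_1 s_2 \ldots$ in $\mathcal{A}$ is Rabin-accepting, i.e., there exists $i$ such that the set of DRA states visited infinitely often meets $F_i$ but avoids $E_i$. By the definition of $L'$, this translates directly into a Rabin-type condition on $\mathrm{Inf}(\pi') \subseteq Q \times S$, namely that there is some $i$ with $F_i \in L'(\langle Q_j,s_\ell\rangle)$ for some state in $\mathrm{Inf}(\pi')$ and $E_i \notin L'(\langle Q_j, s_\ell\rangle)$ for every state in $\mathrm{Inf}(\pi')$.

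Then I would invoke the standard finite-Markov-chain fact that, with probability one, $\mathrm{Inf}(\pi')$ equals some BSCC $B$ of the chain $(\mathcal{M}\otimes\mathcal{A})_\nu$. Comparing the acceptance condition above with Definition 3, $B$ meets the condition exactly when $B$ is accepting. Therefore almost all paths $\pi'$ satisfy the acceptance condition if and only if they are eventually absorbed in an accepting BSCC, which by Definition 4 is exactly reaching $U^A$. Combining with the path bijection of the first paragraph yields
\begin{equation}
\mathcal{P}_{\mathcal{I}[\nu]}(Q_i \models \phi) = \mathcal{P}_{(\mathcal{M}\otimes\mathcal{A})_\nu}(\langle Q_i, s_0\rangle \models \Diamond U^A),
\end{equation}
as claimed.

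The main obstacle is conceptual rather than computational: one must be convinced that after fixing the adversary $\nu$, the product $(\mathcal{M}\otimes\mathcal{A})_\nu$ is a \emph{bona fide} Markov chain on $Q\times S$ in which the BSCC decomposition is applicable. This is where determinism of $\mathcal{A}$ is essential; it is what collapses the history-dependent choices permitted by $\nu$ (through distinct product states $\langle Q_j, s\rangle$ and $\langle Q_j, s''\rangle$) into memoryless transitions on $Q\times S$, as noted in the paragraph preceding the fact. Once this is granted, the remainder of the argument is routine inheritance from the classical Markov-chain case in \cite{baier2008principles}.
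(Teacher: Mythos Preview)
The paper does not prove this statement; it is recorded as a Fact with a bare citation to \cite{baier2008principles}. Your sketch is the standard Baier--Katoen argument (measure-preserving lift of paths via determinism of $\mathcal{A}$, translation of Rabin acceptance to the accepting-BSCC condition, almost-sure absorption into a BSCC), and those steps are correct for an ordinary Markov chain.

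There is, however, a real gap in your last paragraph. You assert that determinism of $\mathcal{A}$ ``collapses the history-dependent choices permitted by $\nu$ into memoryless transitions on $Q\times S$.'' That is not true for a general adversary as defined here. The adversary $\nu$ is a map from finite \emph{state} sequences $q_0 q_1\ldots q_k$ to induced Markov chains, whereas the second component $s_k$ of the product state is determined only by the \emph{label} sequence $L(q_1)\ldots L(q_k)$. Two distinct histories with identical label traces reach the same product state $\langle q_k, s_k\rangle$, yet $\nu$ is free to assign them different next-step distributions; in that situation $(\mathcal{M}\otimes\mathcal{A})_\nu$ is not a time-homogeneous Markov chain on $Q\times S$ at all, and the BSCC decomposition you invoke is not available. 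Determinism of $\mathcal{A}$ buys you uniqueness of the lift $\pi\mapsto\pi'$, but it does not make an arbitrary $\nu$ product-memoryless.

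The paper's own formulation of the Fact glosses over the same point. What is actually true, and what the paper needs for Theorems~1--3, is that the infimum and supremum of $\mathcal{P}_{\mathcal{I}[\nu]}(Q_i\models\phi)$ over all adversaries are attained by adversaries that \emph{are} memoryless on $Q\times S$ (a standard MDP/IMDP result for $\omega$-regular objectives once the DRA product has been taken). For such $\nu$ your argument goes through verbatim, and that is enough for the bounds $\widecheck{\mathcal{P}}$ and $\widehat{\mathcal{P}}$ that the paper is after.
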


\begin{figure}
\vspace{0.3cm}
\begin{center}
\hspace*{-1.1cm}
\includegraphics[scale=0.33]{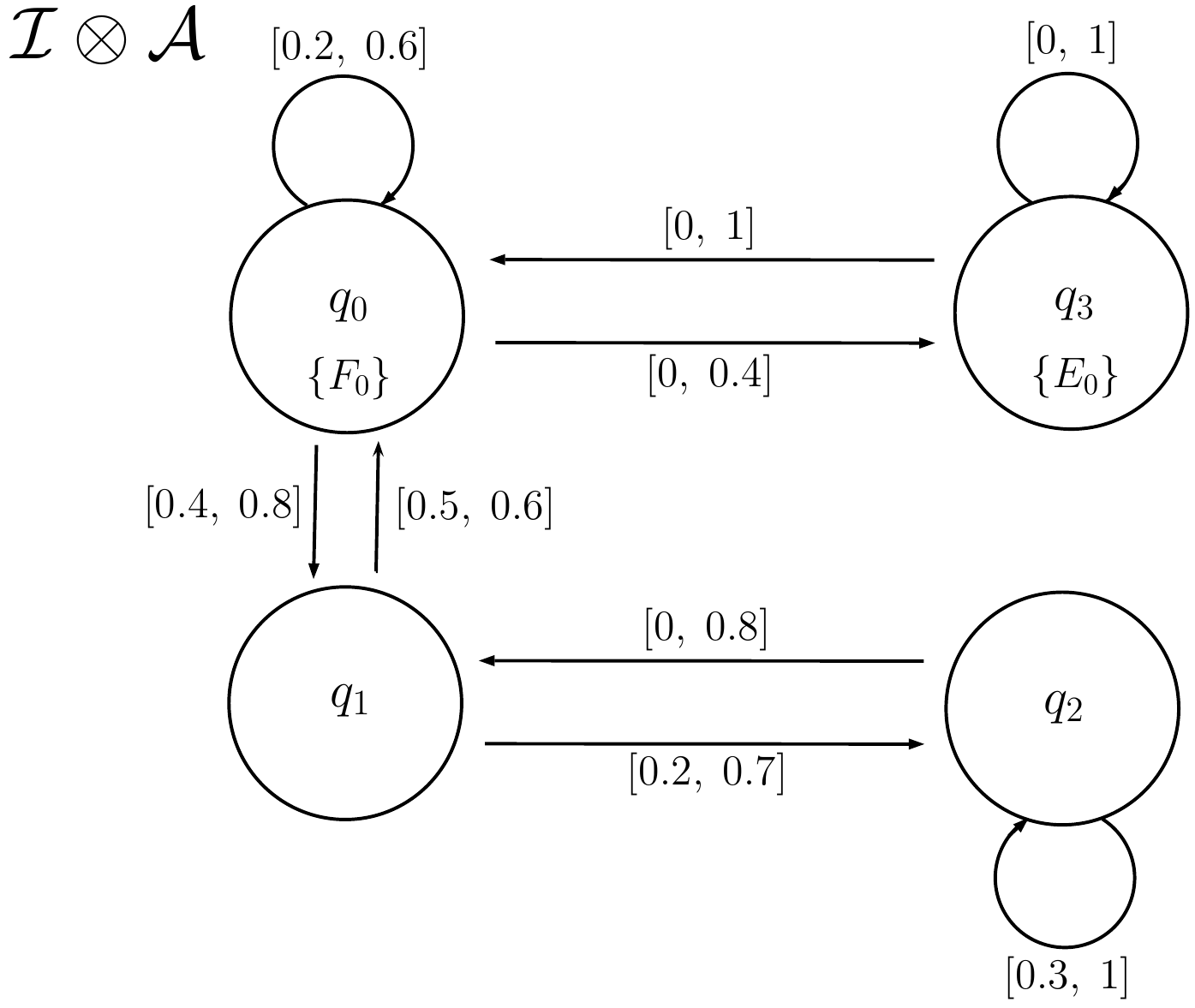}\\
\hspace*{-1.4cm}
\includegraphics[scale=0.33]{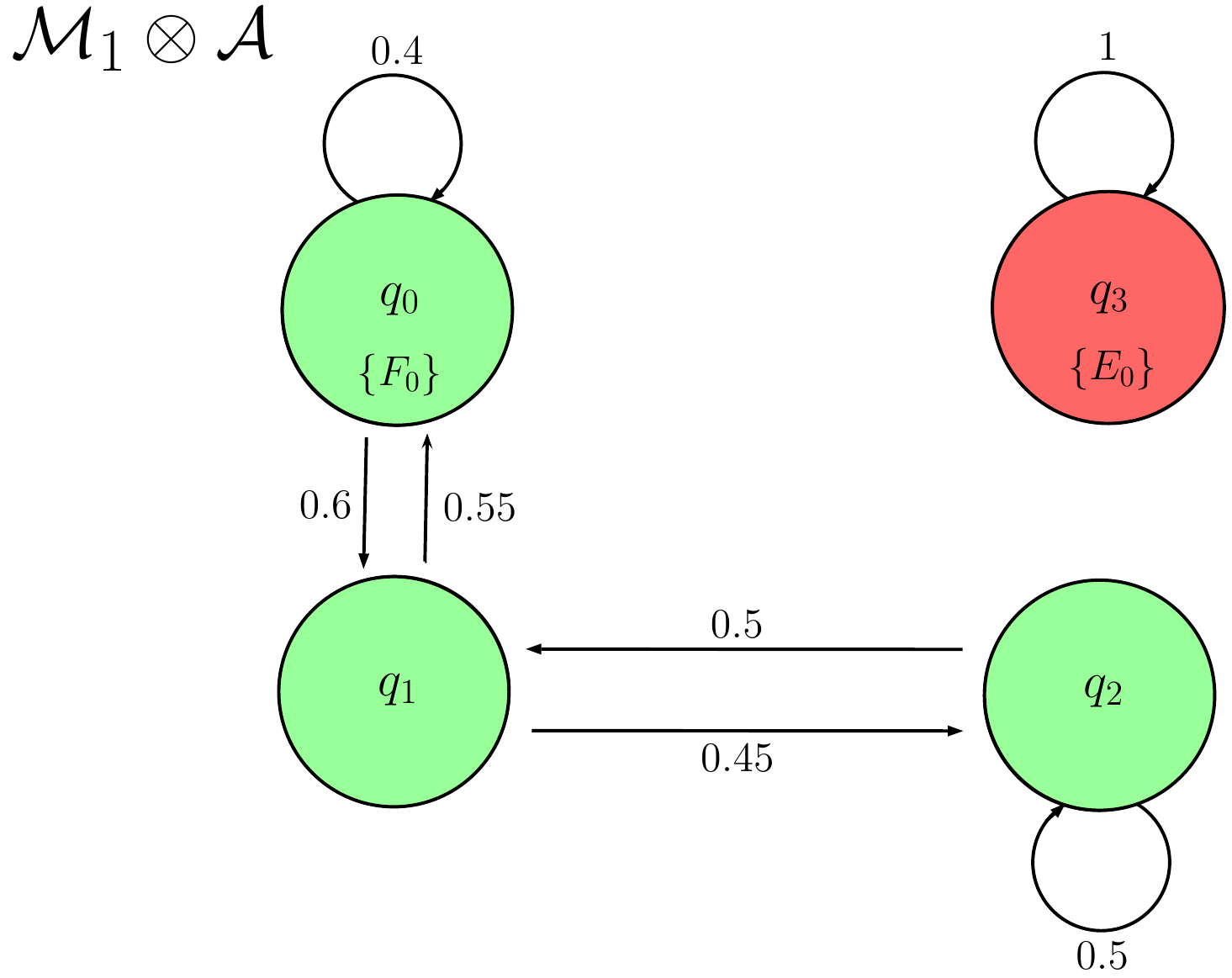}\\
\hspace*{-1.4cm}
\includegraphics[scale=0.33]{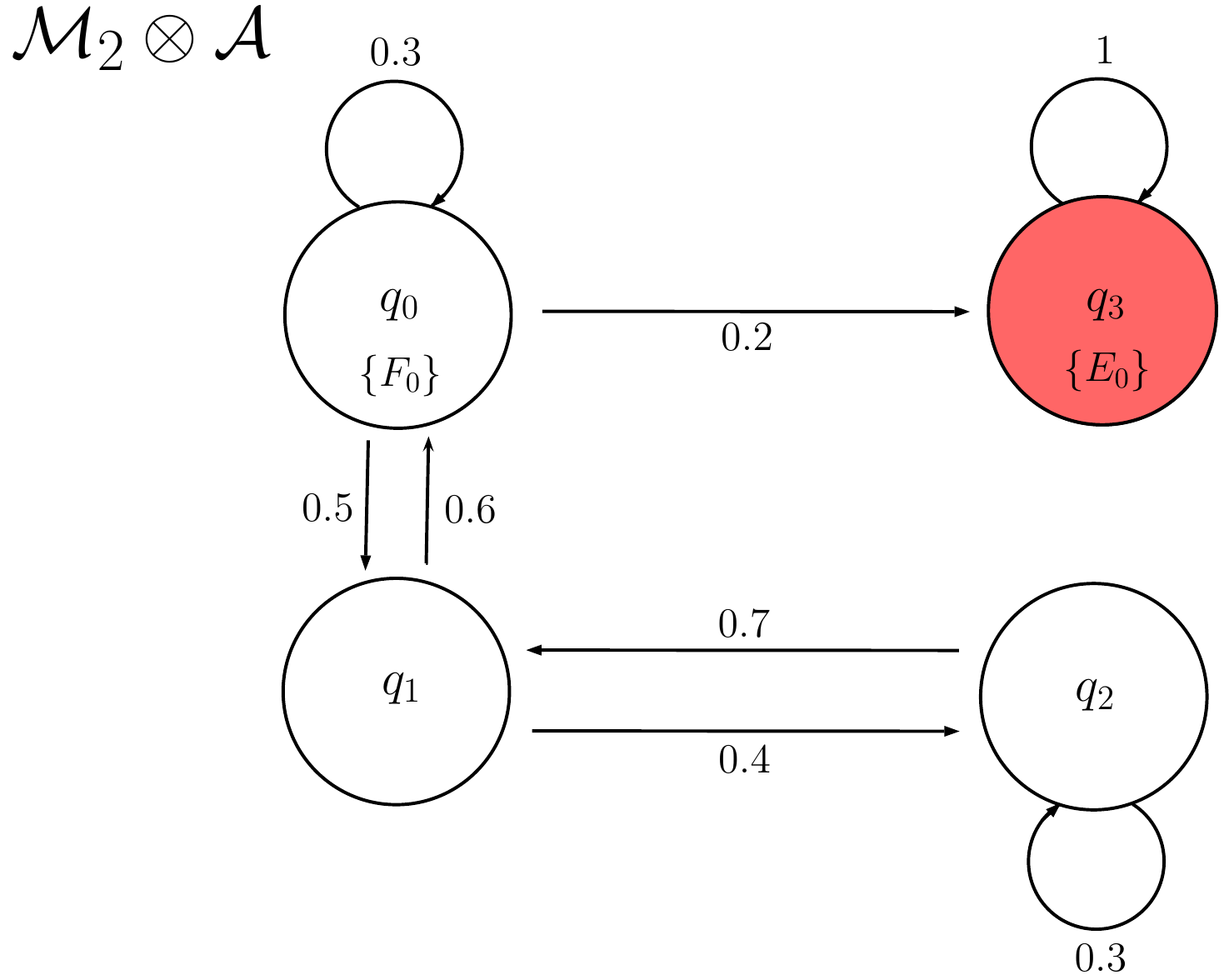}
\end{center}
\caption{Examples of two possible product Markov Chains $\mathcal{M}_{1} \otimes \mathcal{A}$ and $\mathcal{M}_2 \otimes \mathcal{A}$ induced by the product IMC $\mathcal{I} \otimes \mathcal{A}$. The sets of accepting states $U^{A}$ are shown in green whereas the sets of non-accepting states $U^{N}$ appear in red. In $\mathcal{M}_1 \otimes \mathcal{A}$, $U^{A} = \lbrace q_{0}, q_{1}, q_{2} \rbrace$ and $U^{N} = \lbrace q_{3} \rbrace$; in $\mathcal{M}_{2} \otimes \mathcal{A}$, $U^{A} = \lbrace \emptyset \rbrace$ and $U^{N} = \lbrace q_3\rbrace$.}
\end{figure}

It was shown in \cite{chen2013complexity} that the IMDP and UMC interpretations yield identical results for reachability problems. Consequently, computing $\widecheck{\mathcal{P}}_{\mathcal{I}}(Q_i \models \phi)$ and $\widehat{\mathcal{P}}_{\mathcal{I}}(Q_i \models \phi)$ amounts to finding the product Markov Chains induced by $\mathcal{I} \otimes \mathcal{A}$ that respectively minimize and maximize the probability of reaching an accepting state. Such reachability problems in IMCs have already been studied and solved when the destination states are fixed for all induced Markov Chains \cite{chatterjee2008model} \cite{lahijanian2015formal}. However, the set of accepting and non-accepting states may not be fixed in product IMCs and varies as a function of the assumed values for each transition. Specifically, $U^A$ and $U^N$ are determined by transitions that can be turned ``on" or ``off", i.e. those whose lower bound is zero and upper bound non-zero,  as seen in the example in Fig. 1. In this figure, in the product Markov Chain $\mathcal{M}_{1} \otimes \mathcal{A}$ induced by $\mathcal{I} \otimes \mathcal{A}$, the set of accepting states is $\lbrace q_{0}, q_{1}, q_{2} \rbrace$ while $\lbrace q_{3} \rbrace$ is non-accepting. However, in $\mathcal{M}_{2} \otimes \mathcal{A}$, the additional path from $q_0$ to $q_3$ prevents the existence of accepting states.

\textbf{Problem statement}: ``Given an IMC $\mathcal{I}$, an $\omega$-regular property $\phi$, and the DRA $\mathcal{A}$ corresponding to $\phi$, find the greatest lower bound and the least upper bound on the probability of reaching an accepting state from any initial state $\left<Q_i, s_{0} \right>$ in the product IMC $\mathcal{I} \otimes \mathcal{A}$, and thereby find the greatest lower bound and least upper bound on the probability of satisfying $\phi$ for any adversary $\mathcal{\nu}$ in $\mathcal{I}$ and for any initial state $Q_{i}$."

We emphasize that this problem is non-trivial due the dependence of the set of accepting states on the assumed values for the transitions whose lower bound is zero and whose upper bound is non-zero.
\section{BOUNDING THE SATISFIABILITY OF $\omega$-REGULAR PROPERTIES IN AN IMC}

In \cite{chatterjee2008model}, the authors discussed an algorithm for computing the probability bounds of reaching any fixed set of states in an IMC. We remarked in the previous section that, in general, the set of accepting and non-accepting states in a product IMC may depend on the assumed transition values. This is however not always the case. Let us define special classes of product IMCs.\\
\begin{definition}
A product IMC $\mathcal{I} \otimes \mathcal{A}$ is an \textit{Accepting-Static Interval-Valued Markov Chain (ASIMC)} if for any two product Markov Chains $\mathcal{M}_{1} \otimes \mathcal{A}$ and $\mathcal{M}_{2} \otimes \mathcal{A}$ induced by $\mathcal{I} \otimes \mathcal{A}$, it holds that
$(U^A)_{\mathcal{M}_{1} \otimes \mathcal{A}} = (U^A)_{\mathcal{M}_{2} \otimes \mathcal{A}}$.\\
\end{definition}
\begin{definition}
A product IMC $\mathcal{I} \otimes \mathcal{A}$ is an \textit{Non-Accepting-Static Interval-Valued Markov Chain (NASIMC)} if for any two product Markov Chains $\mathcal{M}_{1} \otimes \mathcal{A}$ and $\mathcal{M}_{2} \otimes \mathcal{A}$ induced by $\mathcal{I} \otimes \mathcal{A}$, it holds that
$(U^N)_{\mathcal{M}_{1} \otimes \mathcal{A}} = (U^N)_{\mathcal{M}_{2} \otimes \mathcal{A}}$.\\
\end{definition}
\noindent In ASIMCs, the set of accepting states remains the same for all induced product Markov Chains, while this property holds true for non-accepting states in NASIMCs. Therefore, we can apply the standard reachability techniques in \cite{chatterjee2008model} to compute bounds on $\mathcal{P}_{\mathcal{I} \otimes \mathcal{A}}(\left<Q_i,s_0\right> \models \Diamond U^A)$ or $\mathcal{P}_{\mathcal{I} \otimes \mathcal{A}}(\left<Q_i,s_0\right> \models \Diamond U^N)$ in these particular classes of product IMCs.

Notice that any product IMC $\mathcal{I} \otimes \mathcal{A}$ induces at most  $(|Q| \cdot |S|)^{|Q| \cdot |S|}$ combinations of ``on" and ``off" transitions. Thus, a computationally inefficient solution to our problem is to consider every such combinations. Then, for all resulting product IMCs, we bound the probability of reaching the induced $U^A$ from the initial state of interest, and retain the smallest lower bound and greatest upper bound obtained as absolute bounds for the satisfiability of $\phi$. 

In this section, we develop a more efficient method for computing satisfiability bounds for a given $\omega$-regular property $\phi$. We consider two different approaches for finding a lower bound and an upper bound due to the acceptance condition of Rabin automata. Specifically, we prove that all product IMCs induce a worst-case NASIMCs containing the largest set of non-accepting states and in which the probability of reaching an accepting BSCC is minimized from any initial state. Then, we show that the converse best-case ASIMCs are always induced by product IMCs with one Rabin pair, and the probability of reaching an accepting BSCC is maximized from any initial state in those ASIMCs. If the automata corresponding to $\phi$ possesses more than one Rabin pair, we determine an upper bound by computing a lower bound on the satisfiability for the complement of $\phi$. Finally, we suggest a search algorithm for efficiently finding the largest sets of accepting and non-accepting states.

\subsection{Lower Bound Computation}

A key observation is that any infinite sequence of states in a Markov Chain eventually reaches a BSCC.\\

\begin{lemma}
\cite{baier2008principles} For any infinite sequence of states $\pi = q_{0}q_{1}q_{2}\ldots$ in a Markov Chain, there exists an index $i \geq 0$ such that $q_i$ belongs to a BSCC.\\
\end{lemma}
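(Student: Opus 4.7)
The plan is to prove the statement in its almost-sure form, since the literal ``for any $\pi$'' reading is false (a transient self-loop can be traversed forever as a legal sample path). Interpret the claim as: with probability one, any trajectory of the Markov chain satisfies $q_i \in B$ for some BSCC $B$ and some finite $i$. The argument rests on the fact that $Q$ is finite, so the directed graph on $Q$ with edges $\{(q_j,q_\ell): T(q_j,q_\ell)>0\}$ decomposes into finitely many strongly connected components $C_1,\ldots,C_m$, and the condensation $\mathcal{D}$ is a DAG whose sinks are exactly the BSCCs. Crucially, once a trajectory leaves an SCC it cannot return, for otherwise the two SCCs would collapse into a single one.

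The core technical step is a uniform escape bound: for every non-sink SCC $C$ there exists $p_C>0$ such that
\begin{equation}
\Pr\!\left[q_k \in C\text{ for all } 1\le k\le |C| \,\bigm|\, q_0 = q\right] \;\le\; 1-p_C \quad \text{for every } q\in C.
\end{equation}
This follows because some $s^{*}\in C$ has a positive-probability transition to some $q^{*}\notin C$, and strong connectivity of $C$ gives, from every starting $q\in C$, a path of length at most $|C|-1$ inside $C$ terminating at $s^{*}$ which can be extended by the exit edge $s^{*}\to q^{*}$. The product of transition probabilities along this path is strictly positive, and the minimum over the finitely many starting states $q\in C$ supplies a uniform $p_C$.

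Iterating the escape bound via the Markov property gives $\Pr[q_k\in C\text{ for all }k\le n|C|]\le(1-p_C)^n\to 0$, so $C$ is exited almost surely. A union bound over the finitely many non-sink SCCs shows that almost surely every non-BSCC is eventually left, and acyclicity of $\mathcal{D}$ then forces the trajectory into a sink SCC, i.e.\ a BSCC, after finitely many steps; the first such time supplies the desired index $i$. The main---and essentially only---obstacle is the ``for any vs.\ almost surely'' quantifier subtlety; once the statement is read in the almost-sure sense (matching Baier--Katoen's formulation), the uniform escape bound is the technical heart and the remainder is bookkeeping on the condensation DAG.
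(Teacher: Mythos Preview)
The paper does not supply its own proof of this lemma; it is quoted verbatim from Baier--Katoen and invoked only to justify Corollary~1. So there is no ``paper's proof'' to compare against beyond the textbook argument.

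Your proof is correct, and you are right to flag the quantifier issue: as literally written (``for any infinite sequence''), the statement is false---a two-state chain with $T(a,a)=T(a,b)=1/2$, $T(b,b)=1$ admits the legal sample path $a,a,a,\ldots$ which never enters the unique BSCC $\{b\}$. The almost-sure reading is what Baier--Katoen actually prove and is what the paper needs (Corollary~1 only uses that the reachability probabilities of $U^A$ and $U^N$ sum to one). Your argument---condensation DAG on the SCCs, uniform positive escape probability from each non-sink SCC via a bounded-length path to an exit edge, geometric iteration by the Markov property, then acyclicity forcing absorption in a sink---is essentially the standard textbook proof, so nothing is missing or unorthodox. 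One minor tightening: the union bound over non-sink SCCs is not strictly needed, since once you leave an SCC you never return; a simple induction on the depth of the condensation DAG (at most $m$ SCCs can be visited) already gives almost-sure arrival at a sink in finite time.
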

The following corollary relies on the fact that a BSCC is either accepting or non-accepting.\\

\begin{corollary}
For any initial state $ \left< Q_i, s_0 \right>$ in a product Markov Chain $\mathcal{M} \otimes {\mathcal{A}}$,
\begin{align}
\mathcal{P}_{\mathcal{M} \otimes \mathcal{A}}( \left<Q_i,s_0 \right> \models \Diamond U^A) = 1- \mathcal{P}_{\mathcal{M} \otimes \mathcal{A}}(\left< Q_i,s_0 \right> \models \Diamond U^N) \; \; .
\end{align}

\end{corollary}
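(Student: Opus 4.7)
The plan is to observe that once the path enters any BSCC it can never leave, so the events $\Diamond U^A$ and $\Diamond U^N$ are mutually exclusive, and Lemma 1 guarantees that some BSCC is reached with probability one.

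First I would verify that the sets $U^A$ and $U^N$ partition the collection of all states lying in some BSCC. By Definition 4, a state belongs to exactly one BSCC (distinct BSCCs are disjoint, because if two BSCCs shared a state one could concatenate internal paths through the shared state to show that their union is strongly connected, contradicting the maximality clause in Definition 3). By Definition 5 each BSCC is either accepting or not, so the labels ``accepting'' and ``non-accepting'' assigned by Definition 6 to the states of a BSCC are well defined and mutually exclusive. Hence $U^A \cap U^N = \emptyset$ and $U^A \cup U^N$ is exactly the set of states that lie in some BSCC.

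Next I would use the absorbing property of BSCCs (item 3 of Definition 3): for any $s \in B$, the transition probabilities out of $B$ sum to zero. Consequently, if a path ever visits a state in $U^A$, it then stays inside that accepting BSCC forever and never visits any state of $U^N$; symmetrically, visiting $U^N$ rules out any later visit to $U^A$. Therefore the events $\Diamond U^A$ and $\Diamond U^N$ are disjoint.

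Finally, by Lemma 1 every infinite path eventually enters a BSCC, so $\mathcal{P}_{\mathcal{M}\otimes \mathcal{A}}(\langle Q_i, s_0\rangle \models \Diamond (U^A \cup U^N)) = 1$. Combining this with disjointness yields
\begin{equation*}
\mathcal{P}_{\mathcal{M}\otimes\mathcal{A}}(\langle Q_i,s_0\rangle\models \Diamond U^A) + \mathcal{P}_{\mathcal{M}\otimes\mathcal{A}}(\langle Q_i,s_0\rangle\models \Diamond U^N) = 1,
\end{equation*}
which rearranges to the claimed identity. There is no real obstacle here: the only subtlety is being careful that BSCCs are vertex-disjoint so that the accepting/non-accepting labels partition rather than merely cover the BSCC states, and that Lemma 1, although stated path-wise, is being applied in its standard probability-one sense for finite-state Markov chains.
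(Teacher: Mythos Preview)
Your proof is correct and follows essentially the same approach as the paper: invoke Lemma 1 to get that a BSCC is reached with probability one, use that every BSCC is either accepting or non-accepting (so $U^A$ and $U^N$ partition the BSCC states), and add the probabilities. The paper's proof is a two-line version of yours; you simply make explicit the disjointness argument (distinct BSCCs share no states, BSCCs are absorbing) and the probability-one reading of Lemma 1 that the paper leaves implicit.
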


\begin{proof}
We denote the union of all BSCCs in $\mathcal{M} \otimes {\mathcal{A}}$ by $BSCC(\mathcal{M} \otimes {\mathcal{A}})$. From Lemma 1, it follows that 
\begin{align*}
& \mathcal{P}_{\mathcal{M} \otimes \mathcal{A}}\big( \left<Q_i,s_0 \right> \models \Diamond BSCC(\mathcal{M} \otimes {\mathcal{A}}) \;\big) \\ & = \mathcal{P}_{\mathcal{M} \otimes \mathcal{A}}( \left<Q_i,s_0 \right> \models \Diamond U^A) + \mathcal{P}_{\mathcal{M} \otimes \mathcal{A}}( \left< Q_i,s_0\right> \models \Diamond U^N) = 1.
\end{align*}
\end{proof}\mbox{}\\
\begin{lemma}
Let $\mathcal{I} \otimes  \mathcal{A}$ be a product IMC. Let $(\mathcal{I} \otimes  \mathcal{A})_1$ and $(\mathcal{I} \otimes  \mathcal{A})_2$ be two product NASIMCs induced by $\mathcal{I} \otimes  \mathcal{A}$ with sets of non-accepting states $U^N_1$ and $U^N_2$ respectively. There exists an NASIMC $(\mathcal{I} \otimes  \mathcal{A})_3$ induced by $\mathcal{I} \otimes  \mathcal{A}$ with non-accepting states $U^N_3$ and such that $ \big(U^N_1 \cup U^N_2) \subseteq U^N_3$.\\
\end{lemma}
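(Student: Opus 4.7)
The plan is to construct $(\mathcal{I} \otimes \mathcal{A})_3$ explicitly by combining the transition bounds of $(\mathcal{I} \otimes \mathcal{A})_1$ and $(\mathcal{I} \otimes \mathcal{A})_2$ state-by-state, and then verify that it is an NASIMC whose non-accepting set contains $U^N_1 \cup U^N_2$. First I would establish a preliminary structural fact: in any NASIMC, if $\langle Q_j, s \rangle$ is in the non-accepting set then its upper transition bound to any state outside the non-accepting set must be zero, because otherwise some induced Markov chain would give that transition positive probability, contradicting that $\langle Q_j, s \rangle$ must lie in a non-accepting BSCC contained in the non-accepting set. Consequently $U^N_1$ is closed under every MC induced by $(\mathcal{I} \otimes \mathcal{A})_1$, and likewise $U^N_2$ under $(\mathcal{I} \otimes \mathcal{A})_2$.

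Second, I would define $(\mathcal{I} \otimes \mathcal{A})_3$ as follows: for each state, inherit the outgoing bounds from $(\mathcal{I} \otimes \mathcal{A})_1$ if the state lies in $U^N_1$, from $(\mathcal{I} \otimes \mathcal{A})_2$ if it lies in $U^N_2 \setminus U^N_1$, and from the parent $\mathcal{I} \otimes \mathcal{A}$ otherwise. Since both $(\mathcal{I} \otimes \mathcal{A})_1$ and $(\mathcal{I} \otimes \mathcal{A})_2$ are induced by $\mathcal{I} \otimes \mathcal{A}$, all IMC admissibility and inducement conditions transfer state-wise.

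Third, for an arbitrary product Markov chain $\mathcal{M}_3$ induced by $(\mathcal{I} \otimes \mathcal{A})_3$, I would show every state in $U^N_1 \cup U^N_2$ lies in a non-accepting BSCC. For $\langle Q_j, s \rangle \in U^N_1$, closedness of $U^N_1$ in $\mathcal{M}_3$ means the restriction of $\mathcal{M}_3$ to $U^N_1$ is a valid sub-MC of some MC induced by $(\mathcal{I} \otimes \mathcal{A})_1$, so by the NASIMC property the state lies in a non-accepting BSCC $B \subseteq U^N_1$ of this restriction; since $B$ is closed in $\mathcal{M}_3$ as well, it is also a BSCC of $\mathcal{M}_3$, and its acceptance status depends only on the labels of its own states and is therefore unchanged. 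To confirm the NASIMC property of $(\mathcal{I} \otimes \mathcal{A})_3$, I would then take $U^N_3$ to be the fixed set of non-accepting states across induced MCs, tightening bounds further if necessary so the non-accepting set is indeed invariant.

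The hard part will be the analogous claim for states in $U^N_2 \setminus U^N_1$. Transitions from such a state stay inside $U^N_2$, but may enter $U^N_1 \cap U^N_2$, and from there the bounds inherited from $(\mathcal{I} \otimes \mathcal{A})_1$ push all mass into $U^N_1$ without returning to $U^N_2 \setminus U^N_1$. Whenever $(\mathcal{I} \otimes \mathcal{A})_2$ forces a positive-probability transition from the state into $U^N_1 \cap U^N_2$, the state may become transient in $\mathcal{M}_3$ and fail to be in $U^N_3$. Addressing this will require a subtler rule for the intersection $U^N_1 \cap U^N_2$, for instance by combining the two sets of bounds at those states so that $(\mathcal{I} \otimes \mathcal{A})_2$'s return paths into $U^N_2 \setminus U^N_1$ are preserved; any states that still cannot be kept in a non-accepting BSCC can then be absorbed into a strictly larger $U^N_3 \supseteq U^N_1 \cup U^N_2$, which the lemma permits because it requires only containment.
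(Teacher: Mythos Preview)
Your approach—state-wise inheritance of outgoing transition bounds—is exactly the paper's approach, and your construction coincides with the paper's in the disjoint case $U^N_1 \cap U^N_2 = \emptyset$.

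For the overlap case you correctly isolate the obstruction (states in $U^N_2 \setminus U^N_1$ may become transient if $D = U^N_1 \cap U^N_2$ simply inherits $(\mathcal{I}\otimes\mathcal{A})_1$'s bounds) and you correctly suggest a separate rule on $D$. The paper makes that rule concrete: for $Q_i \in D$ it sets $\widehat{T}_3(Q_i,Q_j)=0$ whenever $Q_j \notin U^N_1 \cup U^N_2$, and keeps $\widecheck{T}_3(Q_i,Q_j)>0$ within $D$ whenever both $(\mathcal{I}\otimes\mathcal{A})_1$ and $(\mathcal{I}\otimes\mathcal{A})_2$ do; states in $U^N_1\setminus D$ inherit from $(\mathcal{I}\otimes\mathcal{A})_1$ and states in $U^N_2\setminus D$ from $(\mathcal{I}\otimes\mathcal{A})_2$, as in your plan. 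The paper then asserts that $U^N_1 \cup U^N_2$ forms a single non-accepting BSCC, using the Rabin-pair argument that any $F_i$-labeled state in the union is matched by an $E_i$-labeled state already present in whichever of $U^N_1,U^N_2$ it came from. So your ``subtler rule for the intersection'' intuition is on target; what is missing from the proposal is the explicit construction and the single-BSCC/non-acceptance argument.

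One genuine slip in your last paragraph: the fallback of ``absorbing'' residual transient states into a strictly larger $U^N_3$ is not available. By Definition~4, $U^N_3$ is the set of states that lie in non-accepting BSCCs of the induced Markov chains; it is determined by the IMC you build, not a parameter you choose. A transient state belongs to no BSCC and hence to neither $U^A$ nor $U^N$. The lemma permits $U^N_3 \supsetneq U^N_1 \cup U^N_2$ only insofar as the construction incidentally renders additional states non-accepting; transience in $U^N_2\setminus U^N_1$ must be eliminated at the level of the transition bounds on $D$, not by enlarging $U^N_3$ after the fact.
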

\begin{proof}
This proof is constructive. Let $\widecheck{T}$, $\widehat{T}$,  $\widecheck{T}_{1}$, $\widehat{T}_{1}$,  $\widecheck{T}_{2}$, $\widehat{T}_{2}$ and  $\widecheck{T}_{3}$, $\widehat{T}_{3}$ be the transition bounds functions in the product IMCs  $\mathcal{I} \otimes  \mathcal{A}$, $(\mathcal{I} \otimes  \mathcal{A})_{1}$, $(\mathcal{I} \otimes  \mathcal{A})_{2}$ and $(\mathcal{I} \otimes  \mathcal{A})_{3}$ respectively. \\
\begin{itemize}
\item Case $U^N_1 \cap U^N_2 = \emptyset \;$: \\\\ Set $\widecheck{T}_{3}(Q_{i}, Q_{j}) = \widecheck{T}_{1}(Q_{i}, Q_{j})$ and $\widehat{T}_{3}(Q_{i}, Q_{j}) = \widehat{T}_{1}(Q_{i}, Q_{j})$  for all $Q_{i} \in U^N_1$ and for all $Q_{j} \in Q \times S$. Set $\widecheck{T}_{3}(Q_{i}, Q_{j}) = \widecheck{T}_{2}(Q_{i}, Q_{j})$ and $\widehat{T}_{3}(Q_{i}, Q_{j}) = \widehat{T}_{2}(Q_{i}, Q_{j})$  for all $Q_{i} \in U^N_2$ and for all $Q_{j} \in Q \times S$. We can do this because $U^N_1$ and $U^N_2$ are disjoint and the transitions leaving from any state in $U^N_1$ are independent from the transitions leaving from any state in $U^N_2$. Finally, set $\widecheck{T}_{3}(Q_{i}, Q_{j}) = \widecheck{T}(Q_{i}, Q_{j})$ and $\widehat{T}_{3}(Q_{i}, Q_{j}) = \widehat{T}(Q_{i}, Q_{j})$ for all other transitions. The product IMC $(\mathcal{I} \otimes  \mathcal{A})_{3}$ always induces product Markov Chains with sets of non-accepting states containing $U^N_1 \cup U^N_2$.\\
\end{itemize}

\begin{itemize}
\item Case $U^N_1 \cap U^N_2 \not = \emptyset \;$: \\\\ Let $D = U^N_1 \cap U^N_2$. Set $\widecheck{T}_{3}(Q_{i}, Q_{j}) = \widecheck{T}_{1}(Q_{i}, Q_{j})$ and $\widehat{T}_{3}(Q_{i}, Q_{j}) = \widehat{T}_{1}(Q_{i}, Q_{j})$  for all $Q_{i} \in U^N_1 \smallsetminus D$ and for all $Q_{j} \in Q \times S$. Set $\widecheck{T}_{3}(Q_{i}, Q_{j}) = \widecheck{T}_{2}(Q_{i}, Q_{j})$ and $\widehat{T}_{3}(Q_{i}, Q_{j}) = \widehat{T}_{2}(Q_{i}, Q_{j})$  for all $Q_{i} \in U^N_2 \smallsetminus D$ and for all $Q_{j} \in Q \times S$.  Set $\widecheck{T}_{3}(Q_{i}, Q_{j}) > 0$ for all $Q_{i}, Q_{j} \in D$ such that $\widecheck{T}_{1}(Q_{i}, Q_{j}) > 0$ and $\widecheck{T}_{2}(Q_{i}, Q_{j}) > 0$. Set $\widehat{T}_{3}(Q_{i}, Q_{j}) = 0$ for all $Q_{i} \in D$  and $Q_{j} \not \in U^N_1 \cup U^N_2$. $U^N_1 \cup U^N_2$ is now a BSCC in $(\mathcal{I} \otimes \mathcal{A})_{3}$. In addition, for any state in $U^N_1$ that maps to an accepting set $F_{i}$, there has to be a state in $U^N_1$ that maps to the corresponding non-accepting set $E_{i}$ since $U^N_1$ was made non-accepting in $(\mathcal{I} \otimes  \mathcal{A})_{1}$. The same reasoning holds for $U^N_2$. Therefore, $U^N_1 \cup U^N_2$ is non-accepting and the resulting product IMC always induces product Markov Chains with sets of non-accepting states containing $U^N_1 \cup U^N_2$.
\end{itemize}
\end{proof}

Lemma 2 implies the existence of an ASIMC whose set of non-accepting states is the ``largest", in the sense that it contains all sets of non-accepting state which can be induced by a product IMC. \\

\begin{corollary}
Let $\mathcal{I} \otimes \mathcal{A}$ be a product IMC. There exists a NASIMC  induced by $\mathcal{I} \otimes  \mathcal{A}$ with a set of non-accepting states $U^N_{\ell}$ and such that $U^N_{i} \subseteq U^N_{\ell}$, where $U^N_{i}$ is the set of non-accepting states for any product Markov Chain $(\mathcal{M} \otimes \mathcal{A})_{i}$ induced by $\mathcal{I} \otimes \mathcal{A}$.\\
\end{corollary}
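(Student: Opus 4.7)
The plan is to prove the corollary by a finite induction that repeatedly invokes Lemma~2, exploiting the fact that only finitely many non-accepting sets can arise across the product Markov Chains induced by $\mathcal{I} \otimes \mathcal{A}$.

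First, I would observe that every product Markov Chain $(\mathcal{M} \otimes \mathcal{A})_{i}$ induced by $\mathcal{I} \otimes \mathcal{A}$ can be viewed trivially as a NASIMC induced by $\mathcal{I} \otimes \mathcal{A}$, by taking its lower and upper transition bounds to coincide with its transition probabilities; the associated non-accepting set is precisely $U^N_i$. Because $Q \times S$ is finite, the collection of subsets of $Q \times S$ that arise as non-accepting sets across all induced product Markov Chains is itself finite; enumerate the distinct members as $U^N_{(1)}, U^N_{(2)}, \ldots, U^N_{(m)}$, and for each $k$ let $(\mathcal{I} \otimes \mathcal{A})_{(k)}$ be a NASIMC (obtained as above) realizing $U^N_{(k)}$.

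Next, I would construct the target NASIMC by induction on $k$. For the base case, set $\mathcal{J}_1 = (\mathcal{I} \otimes \mathcal{A})_{(1)}$, whose non-accepting set contains $U^N_{(1)}$. For the inductive step, assume we have already built a NASIMC $\mathcal{J}_k$ induced by $\mathcal{I} \otimes \mathcal{A}$ with non-accepting set containing $U^N_{(1)} \cup \cdots \cup U^N_{(k)}$. Applying Lemma~2 to $\mathcal{J}_k$ and $(\mathcal{I} \otimes \mathcal{A})_{(k+1)}$ yields a NASIMC $\mathcal{J}_{k+1}$ induced by $\mathcal{I} \otimes \mathcal{A}$ whose non-accepting set contains $U^N_{(1)} \cup \cdots \cup U^N_{(k+1)}$. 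After $m-1$ applications, $\mathcal{J}_m$ is the desired NASIMC and its non-accepting set $U^N_\ell \supseteq \bigcup_{k=1}^{m} U^N_{(k)}$ contains every $U^N_i$.

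The main subtlety I would need to address is ensuring that $\mathcal{J}_{k+1}$ is still \emph{induced by} the original $\mathcal{I} \otimes \mathcal{A}$, so that the induction can be continued for another step. This follows because the construction used in Lemma~2 selects each transition bound either from one of the two parent NASIMCs or from $\mathcal{I} \otimes \mathcal{A}$ itself, and both parents satisfy the bounds of $\mathcal{I} \otimes \mathcal{A}$ by inductive hypothesis; transitivity of the induced-by relation (as established earlier in the preliminaries) then gives the required containment of transition bounds. Beyond this bookkeeping, the argument is routine and the real technical content sits in Lemma~2.
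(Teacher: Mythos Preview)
Your proposal is correct and follows essentially the same approach as the paper, which leaves the corollary unproved and simply remarks that ``Lemma~2 implies the existence of [a NASIMC] whose set of non-accepting states is the `largest'.'' You have merely made explicit the finite iteration of Lemma~2 that the paper takes for granted, together with the observation that each induced product Markov Chain is trivially a NASIMC; no alternative ideas are involved.
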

\begin{remark} 
Let $[\mathcal{I} \otimes \mathcal{A}]^{N}$ be the set of all NASIMCs induced by $\mathcal{I} \otimes \mathcal{A}$ producing non-accepting states $U^N_{\ell}$ from Corollary 2. We denote the transition bounds functions of $\mathcal{I} \otimes \mathcal{A}$ by $\widecheck{T}$ and $\widehat{T}$. There exists a non-empty set of NASIMCs $[\mathcal{I} \otimes \mathcal{A}]^{N}_{\ell} \subseteq [\mathcal{I} \otimes \mathcal{A}]^{N}$ such that, for all $(\mathcal{I} \otimes \mathcal{A})_{\ell} \in [\mathcal{I} \otimes \mathcal{A}]^{N}_{\ell}$ with transition functions $\widecheck{T}_{\ell}$ and $\widehat{T}_{\ell}$, $\widecheck{T}_{\ell}(  \;\left< Q_{i}, s_i \right>,  \left< Q_{j}, s_j \right> \; ) = \widecheck{T}( \; \left< Q_{i}, s_i \right>,  \left< Q_{j}, s_j \right> \; )$ and  $\widehat{T}_{\ell}( \; \left< Q_{i}, s_i \right> ,\left< Q_{j}, s_j \right> \;) = \widehat{T}( \; \left< Q_{i}, s_i \right>,  \left< Q_{j}, s_j \right> \;)$ for all $\left< Q_{i}, s_i \right> \not \in  U^N_{\ell}$ and all $\left< Q_{j}, s_j \right> \in  Q \times S$. \\
\end{remark}

\noindent Remark 1 is due to the fact that the transition intervals from the states outside of $U^N_{\ell}$ have no effect on $U^N_{\ell}$ being non-accepting. Therefore, for any NASIMC producing $U^N_{\ell}$, we can set the interval of these transitions to the ones in the original product IMC without affecting the set of non-accepting states.\\

\indent Now, consider two sets of non-accepting states $U^N_1$ and $U^N_2$ which can possibly be induced by $\mathcal{I} \otimes \mathcal{A}$, with $U^N_2$ being fully contained in $U^N_1$. The next step consists in proving that if an NASIMC induced by $\mathcal{I} \otimes \mathcal{A}$ has a set of non-accepting states $U^N_2$, there exists a NASIMC induced by $\mathcal{I} \otimes \mathcal{A}$ with set of non-accepting states $U^N_1$ such that the upper bound probability of reaching $U^N_2$ in the first NASIMC is less than or equal to that of reaching $U^N_1$ in the second NASIMC for all initial states.\\

\begin{lemma}
Let $\mathcal{I} \otimes  \mathcal{A}$ be a product IMC. Consider two sets of non-accepting states $U^N_1$ and $U^N_2$ which can be induced by $\mathcal{I} \otimes  \mathcal{A}$ and such that $U^N_2 \subseteq U^N_1$. For any NASIMC $(\mathcal{I} \otimes \mathcal{A})_2$ with non-accepting states $U^N_2$ induced by $\mathcal{I} \otimes  \mathcal{A}$, there exists a NASIMC $(\mathcal{I} \otimes \mathcal{A})_1$ with non-accepting states $U^N_1$ induced by $\mathcal{I} \otimes  \mathcal{A}$ such that, for any initial state $\left<Q_i, s_0\right>$, 
\begin{align}
\mathcal{\widehat{P}}_{(\mathcal{I} \otimes \mathcal{A})_1}(\left<Q_i,s_0\right> \models \Diamond U^N_1) \geq \mathcal{\widehat{P}}_{(\mathcal{I} \otimes \mathcal{A})_2}(\left<Q_i,s_0\right> \models \Diamond U^N_2)  \; \; .
\end{align}

\end{lemma}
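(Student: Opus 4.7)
My plan is to reduce the claim to an explicit single-adversary construction. By the standard reachability result for finite-state IMCs interpreted as IMDPs, there exists a Markov chain $\mathcal{M}_2^{*}$ induced by $(\mathcal{I} \otimes \mathcal{A})_2$ that attains $\widehat{\mathcal{P}}_{(\mathcal{I} \otimes \mathcal{A})_2}(\langle Q_i, s_0 \rangle \models \Diamond U^N_2)$ simultaneously for every initial state. Since $(\mathcal{I} \otimes \mathcal{A})_2$ is an NASIMC, the non-accepting states of $\mathcal{M}_2^{*}$ are exactly $U^N_2$. Because $U^N_1$ is inducible by $\mathcal{I} \otimes \mathcal{A}$, pick any Markov chain $\mathcal{M}_{\#}$ induced by $\mathcal{I} \otimes \mathcal{A}$ whose non-accepting set is $U^N_1$.

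Next I would build a hybrid Markov chain $\mathcal{M}_1^{*}$ by copying the transitions of $\mathcal{M}_{\#}$ from every state in $U^N_1$ and the transitions of $\mathcal{M}_2^{*}$ from every state outside $U^N_1$. Both ingredients are induced by $\mathcal{I} \otimes \mathcal{A}$, hence so is $\mathcal{M}_1^{*}$. I then take $(\mathcal{I} \otimes \mathcal{A})_1$ to be the IMC obtained by setting $\widecheck{T}_1 = \widehat{T}_1$ equal to the transition function of $\mathcal{M}_1^{*}$; it induces only $\mathcal{M}_1^{*}$ and is therefore automatically an NASIMC, provided its unique induced chain has non-accepting set precisely $U^N_1$.

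Verifying this last statement is the principal obstacle. The key observation is that $U^N_1$ is a union of BSCCs in $\mathcal{M}_{\#}$ and is therefore closed there; by construction it is also closed in $\mathcal{M}_1^{*}$. Consequently the BSCC decomposition of $\mathcal{M}_1^{*}$ splits into BSCCs lying inside $U^N_1$ and BSCCs disjoint from $U^N_1$. The former coincide with the BSCCs of $\mathcal{M}_{\#}$ on $U^N_1$ (internal transitions agree), so they are non-accepting. Any BSCC $B$ of $\mathcal{M}_1^{*}$ disjoint from $U^N_1$ inherits its transitions from $\mathcal{M}_2^{*}$, so $B$ is also a BSCC of $\mathcal{M}_2^{*}$ disjoint from $U^N_1 \supseteq U^N_2$, hence accepting in $\mathcal{M}_2^{*}$; since accepting status depends only on the labels $L'$, $B$ remains accepting in $\mathcal{M}_1^{*}$. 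Thus the non-accepting states of $\mathcal{M}_1^{*}$ are exactly $U^N_1$.

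The probability inequality then follows from a short trajectory-comparison argument. If $\langle Q_i, s_0 \rangle \in U^N_1$ the bound is trivial, so assume it lies outside $U^N_1$; the transitions from states in $(Q \times S) \setminus U^N_1$ are identical in $\mathcal{M}_1^{*}$ and $\mathcal{M}_2^{*}$, so every finite path that stays outside $U^N_1$ until its first visit carries the same probability in both chains, giving $\mathcal{P}_{\mathcal{M}_1^{*}}(\langle Q_i, s_0 \rangle \models \Diamond U^N_1) = \mathcal{P}_{\mathcal{M}_2^{*}}(\langle Q_i, s_0 \rangle \models \Diamond U^N_1)$. Monotonicity in $U^N_2 \subseteq U^N_1$ supplies $\mathcal{P}_{\mathcal{M}_2^{*}}(\langle Q_i, s_0 \rangle \models \Diamond U^N_1) \geq \mathcal{P}_{\mathcal{M}_2^{*}}(\langle Q_i, s_0 \rangle \models \Diamond U^N_2) = \widehat{\mathcal{P}}_{(\mathcal{I} \otimes \mathcal{A})_2}(\langle Q_i, s_0 \rangle \models \Diamond U^N_2)$, while $\mathcal{P}_{\mathcal{M}_1^{*}}(\langle Q_i, s_0 \rangle \models \Diamond U^N_1) = \widehat{\mathcal{P}}_{(\mathcal{I} \otimes \mathcal{A})_1}(\langle Q_i, s_0 \rangle \models \Diamond U^N_1)$ because $(\mathcal{I} \otimes \mathcal{A})_1$ induces only $\mathcal{M}_1^{*}$. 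Chaining these yields the desired bound.
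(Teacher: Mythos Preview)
Your argument is correct and follows the same hybrid construction as the paper: fix the transitions out of states in $U^N_1$ so as to render $U^N_1$ a union of non-accepting BSCCs, and keep the transitions out of states not in $U^N_1$ as they were in $(\mathcal{I}\otimes\mathcal{A})_2$. The only real difference is that you first pass to a single maximizing Markov chain $\mathcal{M}_2^{*}$ and then collapse $(\mathcal{I}\otimes\mathcal{A})_1$ to a degenerate IMC inducing only $\mathcal{M}_1^{*}$, whereas the paper retains the full transition \emph{intervals} of $(\mathcal{I}\otimes\mathcal{A})_2$ outside $U^N_1$ and fixes only the transitions inside $U^N_1$. Your version makes the NASIMC property trivial and lets you carry out the BSCC bookkeeping explicitly (which the paper, being a sketch, merely asserts); the paper's version has the slight advantage that its $(\mathcal{I}\otimes\mathcal{A})_1$ already has the shape needed for Remark~1 and Lemma~4, i.e., unchanged intervals outside the non-accepting set. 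Either variant establishes the lemma.
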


\begin{proof}
We provide a proof sketch due to space constraint. Let $(\mathcal{I} \otimes \mathcal{A})_2$ be a NASIMC induced by $\mathcal{I} \otimes  \mathcal{A}$ with non-accepting states $U^N_2$. Construct an IMC $(\mathcal{I} \otimes \mathcal{A})_1$ where the transitions from the states in $U^N_1$ have the same probability intervals as in $\mathcal{I} \otimes  \mathcal{A}$, and all others transitions are the same as in $(\mathcal{I} \otimes \mathcal{A})_2$. Then, in $(\mathcal{I} \otimes \mathcal{A})_1$, fix all values of the transitions from the states in $U^N_1$ such that $U^N_1$ is rendered non-accepting --- such a combination exists by assumption. $(\mathcal{I} \otimes \mathcal{A})_1$ has to be a NASIMC with non-accepting states $U^N_1$.\\
\begin{itemize}
\item If $\left<Q_i,s_0\right> \in U^N_1$, $\mathcal{\widehat{P}}_{(\mathcal{I} \otimes \mathcal{A})_1}(\left<Q_i,s_0\right> \models \Diamond U^N_1) = 1 \\
\geq \mathcal{\widehat{P}}_{(\mathcal{I} \otimes \mathcal{A})_2}(\left<Q_i,s_0\right> \models \Diamond U^N_2)$,\\
\item By construction, the transition intervals from the states not in $U^N_1$ to all other states are the same in both $(\mathcal{I} \otimes \mathcal{A})_1$ and $(\mathcal{I} \otimes \mathcal{A})_2$. Since $U^N_2 \subseteq U^N_1$, if $\left<Q_i,s_0\right> \not \in U^N_1$, it must be true that  $\mathcal{\widehat{P}}_{(\mathcal{I} \otimes \mathcal{A})_1}(\left<Q_i,s_0\right> \models \Diamond U^N_1) \geq \mathcal{\widehat{P}}_{(\mathcal{I} \otimes \mathcal{A})_2}(\left<Q_i,s_0\right> \models \Diamond U^N_2)$
\end{itemize}
\end{proof}\mbox{}\\

\begin{lemma}
Let $\mathcal{I} \otimes  \mathcal{A}$ be a product IMC. Let $[\mathcal{I} \otimes \mathcal{A}]^{N}$ and $[\mathcal{I} \otimes \mathcal{A}]^{N}_{\ell}$ be the sets as defined in Remark 1. For any NASIMC $ (\mathcal{I} \otimes \mathcal{A})^{'} \in [\mathcal{I} \otimes \mathcal{A}]^{N}$, any NASIMC $ (\mathcal{I} \otimes \mathcal{A})_{\ell} \in [\mathcal{I} \otimes \mathcal{A}]^{N}_{\ell}$ and any initial state $\left<Q_i, s_0\right>$,
\begin{align*}
\widehat{\mathcal{P}}_{(\mathcal{I} \otimes \mathcal{A})_{\ell}}(\left<Q_i, s_0\right> \models \Diamond U^N_{\ell}) \geq \widehat{\mathcal{P}}_{(\mathcal{I} \otimes \mathcal{A})^{'}}(\left<Q_i, s_0\right> \models \Diamond U^N_{\ell})\;\;. \\
\end{align*}
\end{lemma}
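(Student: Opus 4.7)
The plan is to split on whether the initial state $\left<Q_i, s_0\right>$ lies in $U^N_\ell$. If $\left<Q_i, s_0\right> \in U^N_\ell$, then in every induced product Markov Chain of either NASIMC the chain starts at the target set, so both upper bounds equal $1$ and the inequality holds trivially as an equality.

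For the substantive case $\left<Q_i, s_0\right> \notin U^N_\ell$, the strategy is to show that for every product Markov Chain $\mathcal{M}'$ induced by $(\mathcal{I} \otimes \mathcal{A})^{'}$, there exists a product Markov Chain $\mathcal{M}_\ell$ induced by $(\mathcal{I} \otimes \mathcal{A})_\ell$ whose reachability probability to $U^N_\ell$ from $\left<Q_i,s_0\right>$ is at least as large; the lemma will then follow by taking suprema over the two NASIMCs. I would build $\mathcal{M}_\ell$ by copying, for every state in $(Q\times S)\setminus U^N_\ell$, the outgoing distribution of $\mathcal{M}'$, and by assigning for every state in $U^N_\ell$ any distribution feasible in $(\mathcal{I} \otimes \mathcal{A})_\ell$. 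The outside part is admissible because, by Remark 1, $(\mathcal{I} \otimes \mathcal{A})_\ell$ retains the original intervals of $\mathcal{I}\otimes\mathcal{A}$ on transitions emanating from $(Q\times S)\setminus U^N_\ell$, and those intervals contain the (possibly narrower) intervals of $(\mathcal{I} \otimes \mathcal{A})^{'}$ and hence all transition probabilities of $\mathcal{M}'$; the inside part is admissible because $(\mathcal{I} \otimes \mathcal{A})_\ell$ is itself an IMC.

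The crux is then to show that $\mathcal{M}_\ell$ and $\mathcal{M}'$ assign the same reachability probability to $U^N_\ell$ from $\left<Q_i, s_0\right>$. Both chains are induced by NASIMCs whose non-accepting set is exactly $U^N_\ell$, so each state of $U^N_\ell$ lies in some BSCC contained in $U^N_\ell$; by the closure condition in Definition 2, such a BSCC admits no outgoing transition with positive probability, so $U^N_\ell$ is absorbing in both chains. The probability of reaching an absorbing set is determined entirely by the transition structure on its complement, and on that complement $\mathcal{M}_\ell$ and $\mathcal{M}'$ coincide by construction, so the two reachability probabilities are equal, which is strictly stronger than the claim.

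The main obstacle I anticipate is making the absorption step airtight: one must carefully distinguish between $U^N_\ell$, which is defined at the product IMC level via Corollary 2, and the BSCC decomposition of a specific induced product Markov Chain, and verify that in any chain induced by an NASIMC with non-accepting set $U^N_\ell$, the set $U^N_\ell$ is a union of BSCCs of that chain, with no leakage to the complement. Once this is in place, the pointwise equality lifts to the IMC upper bound by monotonicity of the supremum over induced Markov Chains.
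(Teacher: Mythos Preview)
Your proposal is correct and follows essentially the same approach as the paper: split on whether $\left<Q_i,s_0\right>\in U^N_\ell$, and in the nontrivial case use Remark~1 to match, for any Markov Chain induced by $(\mathcal{I}\otimes\mathcal{A})'$, a Markov Chain induced by $(\mathcal{I}\otimes\mathcal{A})_\ell$ with identical transitions from all states outside $U^N_\ell$. The paper's proof is a two-line sketch that stops at this matching; you additionally make explicit the absorption argument (that $U^N_\ell$ is a union of BSCCs in every induced chain of either NASIMC, hence reachability depends only on the outside transitions), which the paper leaves implicit but which is indeed the point that deserves care.
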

\begin{proof}
If $\left<Q_i, s_0\right> \in U^{N}_{\ell}$, then $\widehat{\mathcal{P}}_{(\mathcal{I} \otimes \mathcal{A})_{\ell}}(\left<Q_i, s_0\right> \models \Diamond U^N_{\ell})  = \widehat{\mathcal{P}}_{(\mathcal{I} \otimes \mathcal{A})^{'}}(\left<Q_i, s_0\right> \models \Diamond U^N_{\ell}) = 1$. If $\left<Q_i, s_0\right> \not \in U^{N}_{\ell}$, the inequality follows from the fact that, by the definition of $[\mathcal{I} \otimes \mathcal{A}]^{N}_{l}$, for any Markov Chain induced by $(\mathcal{I} \otimes \mathcal{A})^{'}$, there exists a Markov Chain induced by $(\mathcal{I} \otimes \mathcal{A})_{\ell}$ with similar transition values from all states not in $U^{N}_{\ell}$.
\end{proof}\mbox{}\\

We call $[\mathcal{I} \otimes \mathcal{A}]^{N}_{\ell}$ the set of the worst case NASIMCs. 
Theorem 1 claims that the computation of a lower bound on $\phi$ from any initial state reduces to a reachability problem in any NASIMC in the set $[\mathcal{I} \otimes \mathcal{A}]^{N}_{\ell}$.\\

\begin{theorem}
Let $\mathcal{I}$ be an IMC and $\mathcal{A}$ be a DRA corresponding to the $\omega$-regular property $\phi$. Let $(\mathcal{I} \otimes \mathcal{A})_{\ell}$ and $(\mathcal{I} \otimes \mathcal{A})^{'}$  be any two NASIMCs from the set $[\mathcal{I} \otimes \mathcal{A}]^{N}_{\ell}$ defined in Remark 1. It holds that $\mathcal{\widehat{P}}_{(\mathcal{I} \otimes \mathcal{A})_{\ell}}(\left<Q_i, s_0\right> \models \Diamond U^N_{\ell}) = \mathcal{\widehat{P}}_{(\mathcal{I} \otimes \mathcal{A})^{'}}(\left<Q_i, s_0\right> \models \Diamond U^N_{\ell})$ and, for any state $Q_i \in \mathcal{I}$,

\begin{align}
\widecheck{\mathcal{P}}_{\mathcal{I}}(Q_i  \models \phi) = 1 - \mathcal{\widehat{P}}_{(\mathcal{I} \otimes \mathcal{A})_{\ell}}(\left<Q_i, s_0\right> \models \Diamond U^N_{\ell}) \; \;.
\end{align}

\end{theorem}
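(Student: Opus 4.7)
The plan is to decompose Theorem 1 into two claims: the equality of $\widehat{\mathcal{P}}_{(\cdot)}(\langle Q_i, s_0\rangle \models \Diamond U^N_{\ell})$ across all NASIMCs in $[\mathcal{I} \otimes \mathcal{A}]^N_\ell$, and the identity $\widecheck{\mathcal{P}}_{\mathcal{I}}(Q_i \models \phi) = 1 - \widehat{\mathcal{P}}_{(\mathcal{I} \otimes \mathcal{A})_\ell}(\langle Q_i, s_0\rangle \models \Diamond U^N_\ell)$. The first claim would follow directly from Remark 1: every NASIMC in $[\mathcal{I} \otimes \mathcal{A}]^N_\ell$ preserves the original transition intervals of $\mathcal{I} \otimes \mathcal{A}$ on every outgoing transition from states outside $U^N_\ell$, and since the event $\Diamond U^N_\ell$ is trivially satisfied once $U^N_\ell$ is entered, its upper-bound probability depends only on the transition intervals out of $(Q \times S) \smallsetminus U^N_\ell$, which coincide across all such NASIMCs.

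For the second claim, I would first combine Fact 1 with Corollary 1 to rewrite
\begin{align*}
\widecheck{\mathcal{P}}_{\mathcal{I}}(Q_i \models \phi) = 1 - \sup_{\nu \in \mathcal{\nu}_{\mathcal{I}}} \mathcal{P}_{(\mathcal{M} \otimes \mathcal{A})_\nu}(\langle Q_i, s_0\rangle \models \Diamond U^N_\nu),
\end{align*}
where $U^N_\nu$ denotes the non-accepting set of the induced product Markov Chain associated with adversary $\nu$. It then suffices to prove that this supremum equals $\widehat{\mathcal{P}}_{(\mathcal{I} \otimes \mathcal{A})_\ell}(\langle Q_i, s_0\rangle \models \Diamond U^N_\ell)$, which I would do by sandwiching. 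For the upper inequality, I would view each induced $(\mathcal{M} \otimes \mathcal{A})_\nu$ as a degenerate NASIMC with singleton transition intervals and non-accepting set $U^N_\nu \subseteq U^N_\ell$ (by Corollary 2); Lemma 3 then supplies a NASIMC in $[\mathcal{I} \otimes \mathcal{A}]^N$ with non-accepting set $U^N_\ell$ whose upper-bound reachability of $U^N_\ell$ dominates $\mathcal{P}_{(\mathcal{M} \otimes \mathcal{A})_\nu}(\langle Q_i, s_0\rangle \models \Diamond U^N_\nu)$, and Lemma 4 in turn shows that any NASIMC in $[\mathcal{I} \otimes \mathcal{A}]^N_\ell$ dominates that quantity.

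For the matching lower inequality, I would pick any $(\mathcal{I} \otimes \mathcal{A})_\ell \in [\mathcal{I} \otimes \mathcal{A}]^N_\ell$ together with a Markov Chain $\mathcal{M}^\star$ induced by $(\mathcal{I} \otimes \mathcal{A})_\ell$ that attains $\widehat{\mathcal{P}}_{(\mathcal{I} \otimes \mathcal{A})_\ell}(\langle Q_i, s_0\rangle \models \Diamond U^N_\ell)$; because $(\mathcal{I} \otimes \mathcal{A})_\ell$ is a NASIMC, the non-accepting set of $\mathcal{M}^\star$ is exactly $U^N_\ell$, and via Fact 1 this Markov Chain corresponds to an adversary $\nu^\star \in \mathcal{\nu}_{\mathcal{I}}$ contributing precisely this value to the supremum. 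The main obstacle I anticipate is the careful identification of each adversary in $\mathcal{\nu}_{\mathcal{I}}$ with a Markov Chain induced by $\mathcal{I} \otimes \mathcal{A}$, since the IMDP supremum is over history-dependent strategies while a single induced Markov Chain is memoryless; the earlier observation that IMDP and UMC reachability coincide, together with Fact 1, should reconcile these viewpoints, but keeping track of the varying target $U^N_\nu$ alongside the fixed $U^N_\ell$ when invoking Lemmas 3 and 4 is the step that demands the most precision.
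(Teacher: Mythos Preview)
Your proposal is correct and follows essentially the same route as the paper's proof: invoke Corollary~1 (via Fact~1) to convert the lower bound on $\phi$ into a supremum of $\mathcal{P}(\Diamond U^N)$ over induced product Markov Chains, then apply Lemma~3 and Lemma~4 to localize that supremum inside any NASIMC of $[\mathcal{I}\otimes\mathcal{A}]^N_\ell$. The paper's proof is considerably terser---it does not spell out the first claim, the sandwiching, or the memoryless-versus-history-dependent adversary issue---so your added detail (especially invoking Corollary~2 for $U^N_\nu\subseteq U^N_\ell$ and the UMC/IMDP reachability equivalence) fills gaps the paper leaves implicit rather than departing from its argument.
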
\mbox{}\\

\begin{proof}
Corollary 1 implies that, for all $(\mathcal{M} \otimes \mathcal{A})_{\nu}$ induced by $\mathcal{I} \otimes \mathcal{A}$ corresponding to some adversary $\nu \in \nu_{\mathcal{I}}$ of $\mathcal{I}$, $\mathcal{P}_{(\mathcal{M} \otimes \mathcal{A})_{\nu}}(\left<Q_i, s_0\right> \models \Diamond U^A) = \mathcal{P}_{\mathcal{I}[\nu]}(Q_i  \models \phi)$ is minimized when $\mathcal{P}_{(\mathcal{M} \otimes \mathcal{A})_{\nu}}(\left<Q_i, s_0\right> \models \Diamond U^N)$ is maximized. By Lemma 3 and Lemma 4, the maximum value for $\mathcal{P}_{(\mathcal{M} \otimes \mathcal{A})_{\nu}}(\left<Q_i, s_0\right> \models \Diamond U^N)$ is reached for some $(\mathcal{M} \otimes \mathcal{A})_{\nu}$ induced by the NASIMC $(\mathcal{I} \otimes \mathcal{A})_{\ell}$.
\end{proof}

\subsection{Upper Bound Computation}

One could think of a similar approach here and compute a satisfiability upper bound by maximizing the probability of transition to an accepting BSCC. However, due to the acceptance condition of Rabin Automata, the analogous version of Lemma 2 for accepting states does not always hold true, as shown in Fig. 2. We consequently treat two different types of product IMCs separately: those endowed with only one Rabin pair --- that is, $|Acc| = 1$ in $\mathcal{A}$ --- and those with more than one Rabin pair.\\

\begin{figure}
\vspace{0.2cm}
\begin{center}
\includegraphics[scale=0.6]{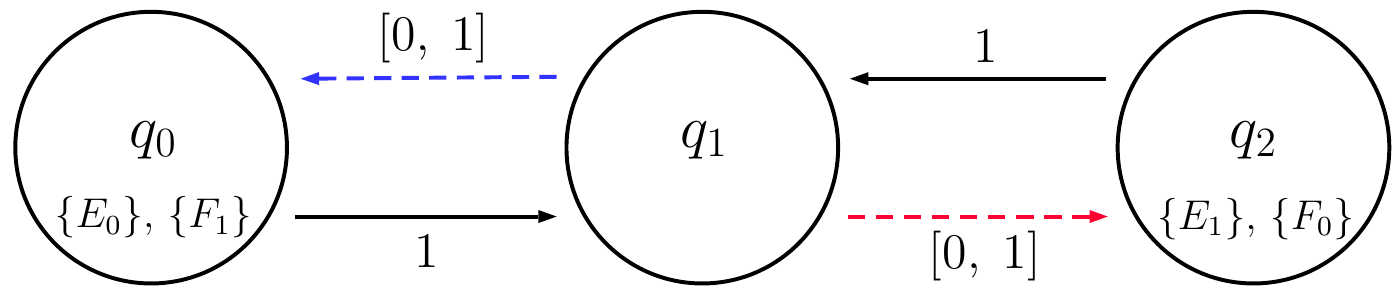}
\end{center}
\caption{Example where the analogous version of Lemma 2 for accepting states does not hold. Setting the blue transition to 1 and the red transition to 0 makes $q_0$ and $q_1$ accepting. Conversely, setting the blue transition to 1 and the red transition to 0 renders $q_1$ and $q_2$ accepting. Nonetheless, no assignment makes $q_0$, $q_1$ and $q_2$ accepting at the same time.}
\end{figure}

\subsubsection{Product IMC with one Rabin pair}\mbox{}\\

The theorem and lemma in this section are similar to the ones in section IV  A, and are provided without proof due to space constraints.\\

We denote by $U^A_{u}$ the largest set of accepting states induced by $\mathcal{I} \otimes \mathcal{A}$. We define the set of best case ASIMCs $[\mathcal{I} \otimes \mathcal{A}]^{A}_{u}$ analogously to the set $[\mathcal{I} \otimes \mathcal{A}]^{N}_{l}$ for NASIMCs.\\

\begin{theorem}
Let $\mathcal{I}$ be an IMC and $\mathcal{A}$ be a DRA with one Rabin pair corresponding to the $\omega$-regular property $\phi$. Let $(\mathcal{I} \otimes \mathcal{A})_{u}$ and $(\mathcal{I} \otimes \mathcal{A})^{'}$ be any two ASIMCs from the set $[\mathcal{I} \otimes \mathcal{A}]^{A}_{u}$. It holds that $\mathcal{\widehat{P}}_{(\mathcal{I} \otimes \mathcal{A})_{u}}(\left<Q_i, s_0\right> \models \Diamond U^A_{u}) = \mathcal{\widehat{P}}_{(\mathcal{I} \otimes \mathcal{A})^{'}}(\left<Q_i, s_0\right> \models \Diamond U^A_{u})$ and,  for any state $Q_i \in \mathcal{I}$,
\begin{align}
\mathcal{\widehat{P}}_{\mathcal{I}}(Q_i  \models \phi) =  \mathcal{\widehat{P}}_{(\mathcal{I} \otimes \mathcal{A})_{u}}(\left<Q_i, s_0\right> \models \Diamond U^A_{u}) \;\; .
\end{align}
\end{theorem}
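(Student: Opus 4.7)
The proof strategy parallels that of Theorem 1, replacing ``non-accepting'' by ``accepting'' and ``worst case'' by ``best case''. I would establish in turn: (i) a single-Rabin-pair analogue of Lemma 2 asserting that any two accepting sets $U^A_1$ and $U^A_2$ induced by $\mathcal{I} \otimes \mathcal{A}$ can be combined into a single inducible accepting set $U^A_3$ with $U^A_1 \cup U^A_2 \subseteq U^A_3$; (ii) as a corollary, the existence of a largest inducible accepting set $U^A_u$ and of the corresponding set of best-case ASIMCs $[\mathcal{I} \otimes \mathcal{A}]^A_u$ (together with the natural analogue of Remark 1, namely that transitions from states outside $U^A_u$ can be left at the original intervals without affecting acceptance); (iii) an analogue of Lemma 3 showing that if $U^A_2 \subseteq U^A_1$ are both inducible accepting sets, then for every ASIMC with accepting states $U^A_2$ there exists an ASIMC with accepting states $U^A_1$ whose upper bound on reaching $U^A_1$ dominates, from every initial state, the original upper bound on reaching $U^A_2$; (iv) an analogue of Lemma 4 giving the equivalence of the upper-bound reachability value across all ASIMCs in $[\mathcal{I} \otimes \mathcal{A}]^A_u$. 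The theorem then follows by combining these ingredients with Fact 1 and taking the supremum over adversaries $\nu \in \nu_{\mathcal{I}}$.

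The pivotal step is item (i). The analogue of Lemma 2 fails in general, as Fig.~2 illustrates: two accepting BSCCs may be accepting via different Rabin pairs, and their union can contain an ``$E_j$''-state for every pair. The single-Rabin-pair hypothesis eliminates this obstruction: every accepting BSCC must contain a state in $F_1$ and avoid all states in $E_1$, and both conditions are preserved under union. I would mirror the constructive proof of Lemma 2. In the disjoint case, retain the transition choices of $(\mathcal{I} \otimes \mathcal{A})_1$ on $U^A_1$, those of $(\mathcal{I} \otimes \mathcal{A})_2$ on $U^A_2$, and the original intervals elsewhere, so that $U^A_1$ and $U^A_2$ each remain accepting BSCCs. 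In the intersection case, with $D = U^A_1 \cap U^A_2$, I would choose transitions from states in $D$ so that $U^A_1 \cup U^A_2$ becomes a single strongly connected set with no outgoing transitions, exactly as in the second case of Lemma 2 (selecting $\widecheck{T}_3(Q_i,Q_j)>0$ on common positive edges within $D$, and $\widehat{T}_3(Q_i,Q_j)=0$ from $D$ to the complement of $U^A_1 \cup U^A_2$). The resulting BSCC contains an $F_1$-state (inherited from either $U^A_1$ or $U^A_2$) and no $E_1$-state (since neither $U^A_1$ nor $U^A_2$ contained one), so it is accepting.

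Items (ii) and (iv) then follow essentially verbatim from the non-accepting case: (ii) by iterating (i) over all inducible accepting sets, and (iv) by noting that for $\langle Q_i, s_0\rangle \in U^A_u$ both probabilities are $1$, and otherwise the outgoing transitions from states not in $U^A_u$ are identical in any two elements of $[\mathcal{I} \otimes \mathcal{A}]^A_u$, so the reachability values to $U^A_u$ coincide. Item (iii) mimics Lemma 3: build $(\mathcal{I} \otimes \mathcal{A})_1$ by taking transition intervals from $\mathcal{I} \otimes \mathcal{A}$ on states in $U^A_1$ and fixing them so that $U^A_1$ is accepting (possible by assumption), while taking transitions from states outside $U^A_1$ identical to those in $(\mathcal{I} \otimes \mathcal{A})_2$. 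If $\langle Q_i, s_0 \rangle \in U^A_1$ the probability equals $1$; otherwise, any adversary in $(\mathcal{I} \otimes \mathcal{A})_2$ achieves the same transition behavior in $(\mathcal{I} \otimes \mathcal{A})_1$ up to first entry into $U^A_1$, and since $U^A_2 \subseteq U^A_1$ every path hitting $U^A_2$ hits $U^A_1$, giving the required inequality.

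The main obstacle is item (i): the construction in the intersection case must simultaneously guarantee strong connectivity of $U^A_1 \cup U^A_2$, closure under transitions, and preservation of the $F_1$-presence / $E_1$-absence conditions. The single-Rabin-pair hypothesis is used only here, which clarifies why the multi-pair case must instead be handled through the complement-property approach discussed elsewhere in the paper.
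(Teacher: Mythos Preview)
Your proposal is correct and matches the paper's approach exactly: the paper states that ``the theorem and lemma in this section are similar to the ones in section IV A, and are provided without proof due to space constraints,'' and you have faithfully reconstructed those analogous steps, correctly identifying that the single-Rabin-pair hypothesis is needed precisely in the analogue of Lemma~2 (to ensure the union of two accepting BSCCs contains an $F_1$-state and no $E_1$-state).
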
\mbox{}\\
\subsubsection{Product IMC with more than one Rabin pairs}\mbox{}\\

We previously observed that product IMCs with more than one Rabin pair don't necessarily induce a unique largest set of accepting states. Instead, we exploit the fact that $\omega$-regular expressions, and consequently DRAs, are closed under complementation \cite{farwer2002omega}. The following theorem states that any $\omega$-regular property $\phi$ can be upper-bounded in an IMC by computing a lower bound on the complement property $ \neg \phi$.\\

\begin{theorem}
Let $\mathcal{I}$ be an IMC and $\mathcal{\overline{A}}$ be a DRA corresponding to $\neg \phi$, the complement of the $\omega$-regular property $\phi$. $(\mathcal{I} \otimes \mathcal{\overline{A}})_{\ell}$ is defined analogously as in Theorem 1. For any state $Q_i \in \mathcal{I}$,

\begin{align}
\mathcal{\widehat{P}}_{\mathcal{I}}(Q_i  \models \phi) =  \mathcal{\widehat{P}}_{(\mathcal{I} \otimes \mathcal{\overline{A}})_{\ell}}(\left<Q_i, s_0\right> \models \Diamond U^N_{\ell}) \; \; .
\end{align}
\end{theorem}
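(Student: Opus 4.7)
The plan is to reduce Theorem 3 to a direct application of Theorem 1 via complementation. The key observation is that for each fixed adversary $\nu \in \nu_{\mathcal{I}}$, the induced Markov chain $\mathcal{I}[\nu]$ satisfies exactly one of $\phi$ and $\neg\phi$ on each infinite path (almost surely, since $\omega$-regular languages are measurable and $\neg\phi$ has set-theoretic complement semantics). Therefore $\mathcal{P}_{\mathcal{I}[\nu]}(Q_i \models \phi) + \mathcal{P}_{\mathcal{I}[\nu]}(Q_i \models \neg\phi) = 1$ for every adversary $\nu$.

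First I would take the supremum over $\nu$ of this identity. Because the adversary set $\nu_{\mathcal{I}}$ is the same regardless of which property is being evaluated (it depends only on $\mathcal{I}$), and because the supremum of $1 - f(\nu)$ over $\nu$ equals $1$ minus the infimum of $f(\nu)$, this yields
\begin{equation}
\widehat{\mathcal{P}}_{\mathcal{I}}(Q_i \models \phi) = 1 - \widecheck{\mathcal{P}}_{\mathcal{I}}(Q_i \models \neg\phi).
\end{equation}
Since $\omega$-regular languages are closed under complementation (cited from \cite{farwer2002omega}), $\neg\phi$ is itself an $\omega$-regular property, and $\overline{\mathcal{A}}$ is a valid DRA for it. Theorem 1 therefore applies to the pair $(\mathcal{I},\overline{\mathcal{A}})$, giving
\begin{equation}
\widecheck{\mathcal{P}}_{\mathcal{I}}(Q_i \models \neg\phi) = 1 - \widehat{\mathcal{P}}_{(\mathcal{I} \otimes \overline{\mathcal{A}})_{\ell}}(\langle Q_i, s_0\rangle \models \Diamond U^N_{\ell}).
\end{equation}
Substituting this into the previous equation, the two $1$'s cancel and we obtain the claimed identity.

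The main obstacle, and the only genuinely nontrivial step, is justifying the exchange of $\sup$ and complementation in the first display: one must verify that for any adversary the event ``path satisfies $\phi$'' and ``path satisfies $\neg\phi$'' are complementary measurable events, and that $\sup_\nu (1 - \mathcal{P}_{\mathcal{I}[\nu]}(\neg\phi)) = 1 - \inf_\nu \mathcal{P}_{\mathcal{I}[\nu]}(\neg\phi)$. The first part follows from the fact that $\omega$-regular languages are Borel subsets of $\Pi^\omega$ and complementation is a Borel operation, so both probabilities are well-defined and sum to one on any Markov chain. The second part is a purely algebraic property of $\sup$ and $\inf$ over the common index set $\nu_{\mathcal{I}}$. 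Once these points are settled, the rest of the argument is a mechanical chaining of Theorem 1 with the complementation identity, and no new structural insight about IMCs or NASIMCs is required.
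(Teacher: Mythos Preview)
Your proposal is correct and follows essentially the same route as the paper's own proof: both use the per-adversary identity $\mathcal{P}_{\mathcal{I}[\nu]}(Q_i \models \phi) = 1 - \mathcal{P}_{\mathcal{I}[\nu]}(Q_i \models \neg\phi)$, pass to the supremum/infimum over $\nu$, and then invoke Theorem 1 for $\neg\phi$. The paper phrases the middle step as an inequality against the minimizing adversary $\nu_\ell$ rather than the explicit $\sup(1-f)=1-\inf f$ identity, but the content is identical; your version is arguably cleaner and makes the measurability and sup/inf exchange justifications more explicit than the paper does.
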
\mbox{}

\begin{proof}
$\omega$-regular properties are closed under complementation . Therefore, for any adversary $\mathcal{\nu} \in \mathcal{\nu}_{\mathcal{I}}$ of $\mathcal{I}$, it must hold that
\begin{align*}
 \mathcal{P}_{\mathcal{I}[\nu]}(Q_i  \models \phi) & = 1 -  \mathcal{P}_{\mathcal{I}[\nu]}(Q_i  \models \neg \phi)\\
 & \leq 1 - \mathcal{P}_{\mathcal{I}[\mathcal{\nu}_{\ell}]}(Q_i  \models \neg \phi) \; \; ,
\end{align*}

\noindent where $\mathcal{\nu}_{\ell}$ is some adversary of $\mathcal{I}$ such that, for all adversary $\mathcal{\nu}$ of $\mathcal{I}$, $\mathcal{P}_{\mathcal{I}[\nu_{\ell}]}(Q_i  \models \neg \phi) \leq \mathcal{P}_{\mathcal{I}[\nu]}(Q_i  \models \neg \phi)$. By theorem 1, we have that 
\begin{align*}
\mathcal{P}_{\mathcal{I}[\mathcal{\nu}_{\ell}]}(Q_i  \models \neg \phi) = 1 - \mathcal{\widehat{P}}_{(\mathcal{I} \otimes \mathcal{\overline{A}})_{\ell}}(\left<Q_i, s_0\right> \models \Diamond U^N_{\ell}) \;\; .
\end{align*}

\noindent Therefore, the first inequality reduces to 
\begin{align*}
\mathcal{\widehat{P}}_{\mathcal{I}}(Q_i  \models \phi) = \mathcal{\widehat{P}}_{(\mathcal{I} \otimes \mathcal{\overline{A}})_{\ell}}(\left<Q_i, s_0\right> \models \Diamond U^N_{\ell}) \; \; .
\end{align*}
\end{proof}
\subsection{Search Algorithm}

\begin{algorithm}[t!]

    \SetKwInOut{Input}{Input}
    \SetKwInOut{Output}{Output}
    \Input{Interval-valued Markov Chain $\mathcal{I}$, $\omega$-regular property $\phi$.}
    \Output{Lower and upper bound probabilities of satisfying $\phi$ in $\mathcal{I}$, $\widecheck{\mathcal{P}}_{\mathcal{I}}(Q_{i} \models \phi)$ and $\widehat{\mathcal{P}}_{\mathcal{I}}(Q_{i} \models \phi)$, for all initial states $Q_{i}$.}
\mbox{}\\Construct a DRA $\mathcal{A}$ corresponding to $\phi$;  \mbox{}\\ Generate the product $\mathcal{I} \otimes \mathcal{A}$; \mbox{}\\ Find the largest set of non-accepting states $U^{N}_{\ell}$ in \mbox{}\\ $\mathcal{I} \otimes \mathcal{A}$ according to our search algorithm; \mbox{}\\ Compute $\widecheck{\mathcal{P}}_{\mathcal{I}}(Q_{i} \models \phi)$ for all $Q_{i}$ using (8) and \mbox{}\\ the reachability algorithm in  \cite{chatterjee2008model}; \mbox{}\\ \mbox{}\\
	\uIf{$|Acc| = 1$}{
	Find the largest set of non-accepting states $U^{A}_{u}$ in \mbox{}\\ $\mathcal{I} \otimes \mathcal{A}$ according to our search algorithm; \mbox{}\\ Compute $\widehat{\mathcal{P}}_{\mathcal{I}}(Q_{i} \models \phi)$ for all $Q_{i}$ using (9) and \mbox{}\\ the reachability algorithm in \cite{chatterjee2008model};
	} 
	\Else{Construct a DRA $\overline{\mathcal{A}}$ corresponding to $\overline{\phi}$, the \mbox{}\\ complement of $\phi$; \mbox{}\\  Generate the product $\mathcal{I} \otimes \overline{\mathcal{A}}$; \mbox{}\\ Find the largest set of non-accepting states $U^{N}_{\ell}$ in \mbox{}\\ $\mathcal{I} \otimes \overline{\mathcal{A}}$ according to our search algorithm; \mbox{}\\ Compute $\widehat{\mathcal{P}}_{\mathcal{I}}(Q_{i} \models \phi)$ for all $Q_{i}$ using (10) and \mbox{}\\ the reachability algorithm in \cite{chatterjee2008model};}\mbox{}\\
	
	\Return{$\widecheck{\mathcal{P}}_{\mathcal{I}}(Q_{i} \models \phi)$, $\widehat{\mathcal{P}}_{\mathcal{I}}(Q_{i} \models \phi)$}
\caption{Probability bounds computation for $\omega$-regular properties in IMCs}
\end{algorithm}

After proving the existence of the sets $U^{N}_{l}$ and $U^{A}_{u}$, we design a search algorithm for finding these sets in a given product IMC $\mathcal{I} \otimes \mathcal{A}$. Well-known techniques, such as Kosaraju's algorithm \cite{sharir1981strong}, list all \textit{strongly connected components} (SCC) in a directed graph. SCCs and BSCCs are defined similarly with the difference that the states in an SCC are permitted to transition outside of it. We seek to exploit these algorithms to detect the largest sets of accepting and non-accepting BSCCs via graph search. 

Our strategy is as follows: first, we assume that all transitions with non-zero upper bounds in $\mathcal{I} \otimes \mathcal{A}$ are effectively non-zero. The resulting product IMC induces a directed graph with a vertex for each state and an edge for all non-zero transitions. In this graph, all SCCs are enumerated. Then, for each SCC and if necessary, we remove the states that prevent it from being a BSCC and accepting (or non-accepting) if allowed by $\mathcal{I} \otimes \mathcal{A}$. Below is a detailed description of the algorithm.\\

\begin{itemize}
\item Generate a directed graph $G(V,E)$ with a vertex for each state in $\mathcal{I} \otimes \mathcal{A}$. An edge links two states $\left<Q_i, s_i\right>$ and $\left<Q_i', s_i' \right>$ if $\widehat{T}_{\left<Q_{i},s_i \right> \rightarrow \left<Q_{i'},s_{i'} \right>} > 0$,
\item Find all strongly connected components in $G$ and list them in $C$,
\item For all SCC $ C^{j} \in C$, check whether it contains a \textit{leaky} state: a state $\left<Q_i, s_i\right> \in C^{j}$ is leaky if, for some state $\left<Q_i', s_i'\right> \not \in C^{j}$, $\widecheck{T}_{\left<Q_{i},s_i \right> \rightarrow \left<Q_{i'},s_{i'} \right>} > 0$ or if $\Sigma_{\left<Q_i', s_i'\right> \in C^j} \widehat{T}_{\left<Q_{i},s_i \right> \rightarrow \left<Q_{i'},s_{i'} \right>} < 1$ (that is, $\left<Q_i, s_i\right>$ has a non-zero probability of transitioning outside of $C_j$ for all refinement of $\mathcal{I} \otimes \mathcal{A}$).
\item If a state $\left<Q_i, s_i\right> \in C^{j}$ is leaky, it cannot belong to a BSCC. Find all states $\left<Q_i', s_i'\right>$ in $C^{j}$ whose transition to a leaky state cannot be "turned off" as in the previous step. These states are leaky as well. Repeat for all leaky states in $C^{j}$.
\item In the subgraph $G^{j}$ induced by $C^{j}$, remove all edges from non-leaky to leaky states. Find all SCCs in $G^{j}$ and add them to $C$,
\item If $C^{j}$ has no leaky state, $C^{j}$ is a BSCC. For all states in $C^{j}$, check if it maps to some accepting set $F_{i}$. If not, $C^{j}$ is a non-accepting BSCC. Otherwise, we treat two different cases depending on which set of states is currently being searched for.
\item \underline{Search for $U^{N}_{l}$}:  For all such $F_{i}$'s, check whether some state in $C^{j}$ maps to the corresponding non-accepting set $E_{i}$. If this is the case for all such $F_{i}$'s, $C^{j}$ is a non-accepting BSCC. Otherwise, the unmatched $F_{i}$ states cannot belong to a non-accepting BSCC. Treat them as leaky and follow the same procedure as before for eliminating leaky states. Add the new SCCs to $C$.
\item \underline{Search for $U^{A}_{u}$}: Check whether some state in $C^{j}$ maps to $E_{1}$ (this algorithm is only valid for automata with one Rabin Pair). If not, $C^{j}$ is accepting and is $U^{A}_{u}$. Otherwise, treat the states mapping to $E_{1}$ as leaky and follow the same procedure as before for eliminating leaky states. Add the new SCCs to $C$.
\item By Lemma 2, $U^{N}_{l}$ is the union of all non-accepting BSCCs found by the algorithm.\\
\end{itemize}
It is interesting to note that the reachability problem can be solved in the original product IMC without having to explicitly construct a worst-case NASIMC or best-case ASIMC. Indeed, we know that, for any state belonging to these sets, the reachability probability is trivially 1. For all states not in these sets, a worst-case NASIMC and best-case ASIMC have the same probabilities of transitions as the original product IMC, as seen in Remark 1.

Algorithm 1 summarizes the entire procedure for bounding the satisfiability of $\omega$-regular properties in IMCs presented in this work.
\section{CASE STUDY}

We now apply the concepts developed in the previous sections to a case study. Our system of interest is an agent moving stochastically on a two-dimensional grid shown in Fig. 3. The grid is divided into 6 locations, representing 6 different states the agent can visit. We assume the system to be evolving in a discrete-time fashion: at each time instance $t$, the agent makes a transition from its current state to a new state according to some probability distribution. The latter depends only on the current state of the agent, i.e. the system satisfies the Markov property. 

\begin{figure}
\vspace{0.2cm}
\begin{center}
\includegraphics[scale=0.34]{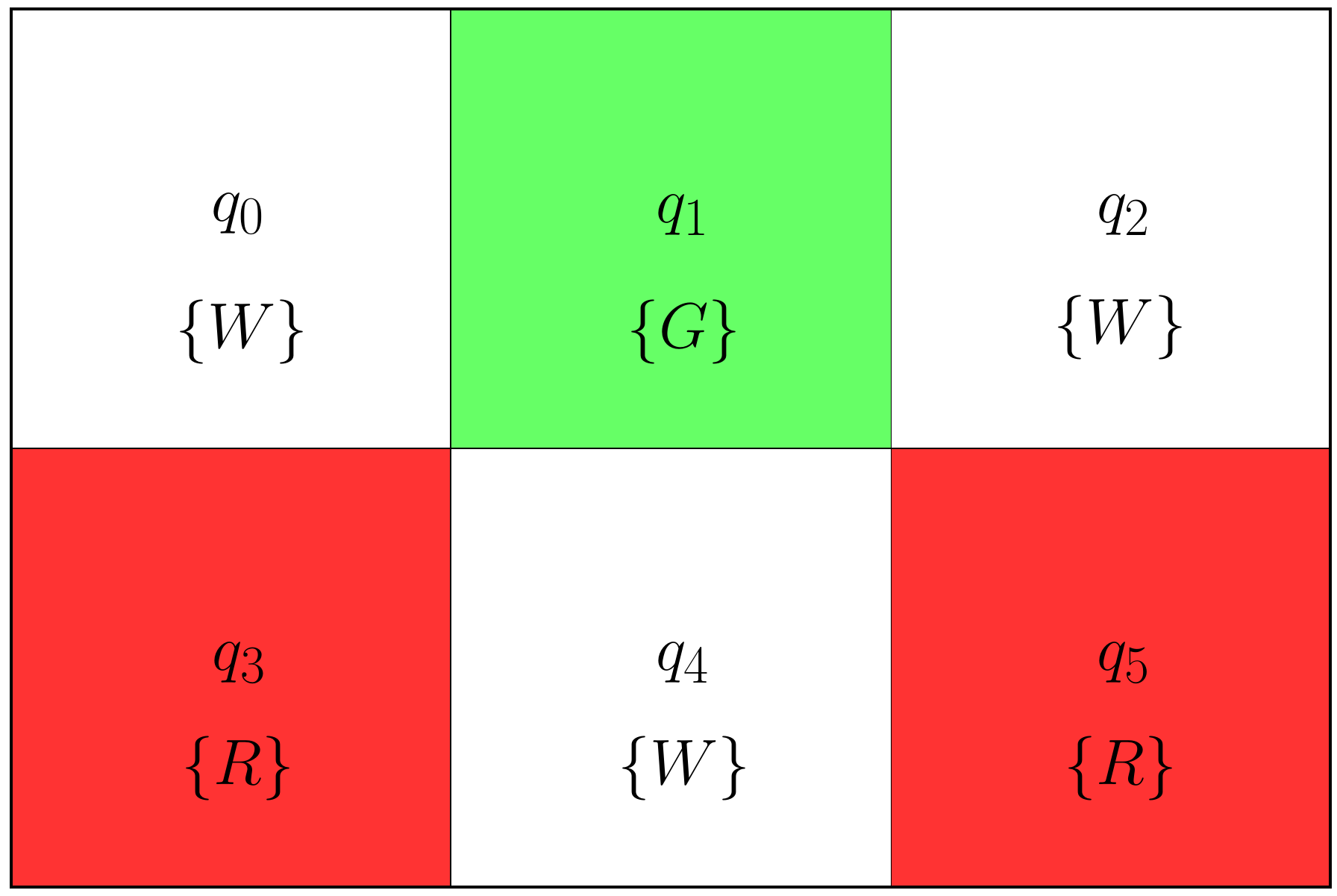}
\end{center}
\caption{A grid representation of the 6 states the agent can be in.}
\end{figure}

\begin{figure}
\begin{center}
\hspace*{-1.3cm}
\includegraphics[scale=0.50]{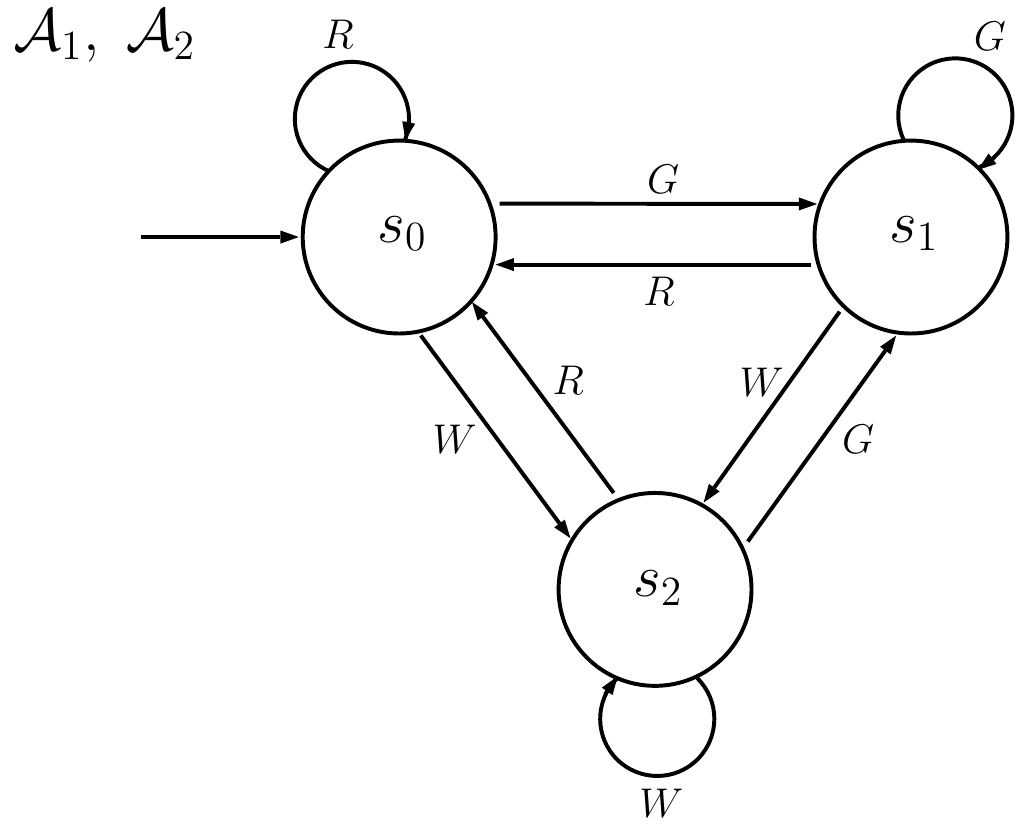}
\end{center}
\caption{A state diagram for automata $\mathcal{A}_{1}$ and $\mathcal{A}_{2}$ corresponding to properties $\phi_{1}$ and $\phi_{2}$ respectively. While their set of states and transition function are identical, they display different acceptance conditions: for $\mathcal{A}_{1}$, $Acc = \lbrace ( \{s_{0} \},  \{ s_{1} \} ) \rbrace$ whereas for $\mathcal{A}_{2}$,  $Acc = \lbrace ( \emptyset, \{ s_{1} \} ), \; ( \{ s_{0}, s_{1} \}, \{ s_{2} \} ) \rbrace$. }
\end{figure}

However, the transition probabilities are not known exactly and an IMC representation of the system is constructed. Matrices $\widehat{T}$ and $\widecheck{T}$ respectively contain the upper and lower probabilities of transition from state to state:

\begin{center}
\begin{tabular}{c || c  c  c  c c c}
$\widecheck{T}$ & $q_0$ & $q_1$ & $q_2$ & $q_3$ & $q_4$ & $q_5$\\
\hline
\hline
$q_0$ & $0$ & $0.2$ & $0$ & $0.3$ & $0.2$ & $0$ \\
$q_1$ & $0$ & $0.05$ & $0.25$ & $0$ & $0.1$ & $0$\\
$q_2$ & $0$ & $0$ & $0$ & $0$ & $1$ & $0$\\
$q_3$ & $0$ & $0$ & $0$ & $1$ & $0$ & $0$\\
$q_4$ & $0$ & $0$ & $1$ & $0$ & $0$ & $0$\\
$q_5$ & $0$ & $0.3$ & $0.2$ & $0$ & $0.2$ & $0$\\
\end{tabular}
\end{center}

\begin{center}
\begin{tabular}{c || c  c  c  c c c}
$\widehat{T}$ & $q_0$ & $q_1$ & $q_2$ & $q_3$ & $q_4$ & $q_5$ \\
\hline
\hline
$q_0$ & $0$ & $0.5$ & $0.3$ & $0.6$ & $0.5$ & $0$\\
$q_1$ & $0.2$ & $0.8$ & $0.6$ & $0.8$ & $0.7$ & $0.5$\\
$q_2$ & $0$ & $0$ & $0$ & $0$ & $1$ & $0$\\
$q_3$ & $0$ & $0$ & $0$ & $1$ & $0$ & $0$\\
$q_4$ & $0$ & $0$ & $1$ & $0$ & $0$ & $0$\\
$q_5$ & $0$ & $0.5$ & $0.5$ & $0$ & $0.3$ & $0$\\
\end{tabular}
\end{center}

\begin{table}[t!]
\vspace{0.4cm}
\centering
\begin{tabular}{| c | c | c | c | c | c | c |}
\cline{2-7}
\multicolumn{1}{c|}{} & $q_{0}$ & $q_{1}$ & $q_{2}$ & $q_{3}$ & $q_{4}$ & $q_{5}$\\
\hline 
Lower bound for $\phi_{1}$& $0$ & $0$ & $0$ & $0$ & $0$ & $0$\\
\hline
Upper bound for $\phi_{1}$& $0$ & $0$ & $0$ & $0$ & $0$ & $0$\\
\hline
Lower bound for $\phi_{2}$& $0.274$ & $0.368$ & $1$ & $0$ & $1$ & $0.684$\\
\hline
Upper bound for $\phi_{2}$& $0.7$ & $1$ & $1$ & $0$ & $1$ & $1$\\
\hline
\end{tabular}
\caption{Table compiling the computed lower and upper bounds for the satisfiability of $\phi_{1}$ and $\phi_{2}$ for all initial states.}
 \vspace{-1cm}
\end{table}

\noindent Each state is labeled as follows: $L(q_0) = L(q_2) = L(q_4) = \lbrace W \rbrace$, $L(q_1)  = \lbrace G \rbrace$ and $L(q_3) =  L(q_5) = \lbrace R \rbrace$. We aim to bound the probability of satisfying $\omega$-regular properties $\phi_{1}$ and $\phi_{2}$, represented by automata $\mathcal{A}_{1}$ and $\mathcal{A}_{2}$ in Fig. 4, from every initial state $q$. In natural language, these properties respectively translate to \\

\noindent 1) \textit{``The agent visits a green state infinitely many times while visiting a red state finitely many times."},\\

\noindent 2) \textit{``The agent shall visit a red state infinitely many times only if it visits a green state infinitely many times."}\\

\noindent Note that $\mathcal{A}_{2}$ has 2 Rabin pairs. According to Theorem 3, we thus have to construct the automata for the complement of $\phi_{2}$. $\phi_{2}$ is expressed in LTL as $\phi_{2} = \square \Diamond G \; \lor \; \Diamond \square W$ which, when complemented, becomes $\neg \phi_{2} = \Diamond \square \overline{G} \wedge  \square \Diamond \overline{W}$. Then, we convert $\neg \phi_{2}$ to a DRA with one of the existing LTL-to-$\omega$-automata translation tools \cite{komarkova2014rabinizer}. Bounds for $\phi_{1}$  and $\phi_{2}$ are computed using Algorithm 1 and are shown in Table 1.

\section{CONCLUSIONS}

We derived an efficient automaton-based technique for bounding the probability of satisfying any $\omega$-regular property in an IMC interpreted as an IMDP. We demonstrated its application through a case study. In future works, we will seek to exploit the mechanisms unveiled in this paper and apply them to Bounded-parameter Markov Decision Processes, the controllable counterparts of IMCs, e.g. to minimize or maximize the probability of occurrence of some behavior in a system.

%%%%%%%%%%%%%%%%%%%%%%%%%%%%%%%%%%%%%%%%%%%%%%%%%%%%%%%%%%%%%%%%%%%%%%%%%%%%%%%%

%%%%%%%%%%%%%%%%%%%%%%%%%%%%%%%%%%%%%%%%%%%%%%%%%%%%%%%%%%%%%%%%%%%%%%%%%%%%%%%%

\bibliographystyle{IEEEtran}

\bibliography{CDC2018.bib}

% Generated by IEEEtran.bst, version: 1.14 (2015/08/26)
\begin{thebibliography}{10}
\providecommand{\url}[1]{#1}
\csname url@samestyle\endcsname
\providecommand{\newblock}{\relax}
\providecommand{\bibinfo}[2]{#2}
\providecommand{\BIBentrySTDinterwordspacing}{\spaceskip=0pt\relax}
\providecommand{\BIBentryALTinterwordstretchfactor}{4}
\providecommand{\BIBentryALTinterwordspacing}{\spaceskip=\fontdimen2\font plus
\BIBentryALTinterwordstretchfactor\fontdimen3\font minus
  \fontdimen4\font\relax}
\providecommand{\BIBforeignlanguage}[2]{{%
\expandafter\ifx\csname l@#1\endcsname\relax
\typeout{** WARNING: IEEEtran.bst: No hyphenation pattern has been}%
\typeout{** loaded for the language `#1'. Using the pattern for}%
\typeout{** the default language instead.}%
\else
\language=\csname l@#1\endcsname
\fi
#2}}
\providecommand{\BIBdecl}{\relax}
\BIBdecl

\bibitem{ding2014optimal}
X.~Ding, S.~L. Smith, C.~Belta, and D.~Rus, ``{Optimal control of Markov
  decision processes with linear temporal logic constraints},'' \emph{IEEE
  Transactions on Automatic Control}, vol.~59, no.~5, pp. 1244--1257, 2014.

\bibitem{fu2014probably}
J.~Fu and U.~Topcu, ``{Probably approximately correct MDP learning and control
  with temporal logic constraints},'' \emph{arXiv preprint arXiv:1404.7073},
  2014.

\bibitem{baier2008principles}
C.~Baier, J.-P. Katoen, and K.~G. Larsen, \emph{Principles of model
  checking}.\hskip 1em plus 0.5em minus 0.4em\relax MIT press, 2008.

\bibitem{abate2011approximate}
A.~Abate, A.~D'Innocenzo, and M.~D. Di~Benedetto, ``{Approximate abstractions
  of stochastic hybrid systems},'' \emph{IEEE Transactions on Automatic
  Control}, vol.~56, no.~11, pp. 2688--2694, 2011.

\bibitem{zamani2014symbolic}
M.~Zamani, P.~M. Esfahani, R.~Majumdar, A.~Abate, and J.~Lygeros, ``Symbolic
  control of stochastic systems via approximately bisimilar finite
  abstractions,'' \emph{IEEE Transactions on Automatic Control}, vol.~59,
  no.~12, pp. 3135--3150, 2014.

\bibitem{dutreix2018}
M.~Dutreix and S.~Coogan, ``{Efficient verification for stochastic mixed
  monotone systems},'' in \emph{International Conference on Cyber-Physical
  Systems}, 2018.

\bibitem{kozine2002interval}
I.~O. Kozine and L.~V. Utkin, ``{Interval-valued finite Markov chains},''
  \emph{Reliable computing}, vol.~8, no.~2, pp. 97--113, 2002.

\bibitem{vskulj2009discrete}
D.~{\v{S}}kulj, ``{Discrete time Markov chains with interval probabilities},''
  \emph{International journal of approximate reasoning}, vol.~50, no.~8, pp.
  1314--1329, 2009.

\bibitem{chatterjee2008model}
K.~Chatterjee, K.~Sen, and T.~Henzinger, ``{Model-checking $\omega$-regular
  properties of interval Markov chains},'' \emph{Foundations of Software
  Science and Computational Structures}, pp. 302--317, 2008.

\bibitem{lahijanian2015formal}
M.~Lahijanian, S.~B. Andersson, and C.~Belta, ``Formal verification and
  synthesis for discrete-time stochastic systems,'' \emph{IEEE Transactions on
  Automatic Control}, vol.~60, no.~8, pp. 2031--2045, 2015.

\bibitem{benedikt2013ltl}
M.~Benedikt, R.~Lenhardt, and J.~Worrell, ``{LTL model checking of interval
  Markov chains},'' in \emph{International Conference on Tools and Algorithms
  for the Construction and Analysis of Systems}.\hskip 1em plus 0.5em minus
  0.4em\relax Springer, 2013, pp. 32--46.

\bibitem{klein2006experiments}
J.~Klein and C.~Baier, ``Experiments with deterministic $\omega$-automata for
  formulas of linear temporal logic,'' \emph{Theoretical Computer Science},
  vol. 363, no.~2, pp. 182--195, 2006.

\bibitem{babiak2013effective}
T.~Babiak, F.~Blahoudek, M.~K{\v{r}}et{\'\i}nsk{\`y}, and J.~Strej{\v{c}}ek,
  ``{Effective translation of LTL to deterministic Rabin automata: Beyond the
  (F, G)-fragment},'' in \emph{International Symposium on Automated Technology
  for Verification and Analysis}.\hskip 1em plus 0.5em minus 0.4em\relax
  Springer, 2013, pp. 24--39.

\bibitem{chen2013complexity}
T.~Chen, T.~Han, and M.~Kwiatkowska, ``On the complexity of model checking
  interval-valued discrete time markov chains,'' \emph{Information Processing
  Letters}, vol. 113, no.~7, pp. 210--216, 2013.

\bibitem{farwer2002omega}
B.~Farwer, ``$\omega$-automata,'' in \emph{Automata logics, and infinite
  games}.\hskip 1em plus 0.5em minus 0.4em\relax Springer, 2002, pp. 3--21.

\bibitem{sharir1981strong}
M.~Sharir, ``{A strong-connectivity algorithm and its applications in data flow
  analysis},'' \emph{Computers \& Mathematics with Applications}, vol.~7,
  no.~1, pp. 67--72, 1981.

\bibitem{komarkova2014rabinizer}
Z.~Kom{\'a}rkov{\'a} and J.~K{\v{r}}et{\'\i}nsk{\`y}, ``{Rabinizer 3: Safraless
  translation of LTL to small deterministic automata},'' in \emph{International
  Symposium on Automated Technology for Verification and Analysis}.\hskip 1em
  plus 0.5em minus 0.4em\relax Springer, 2014, pp. 235--241.

\end{thebibliography}

\end{document}